\DeclareFontFamily{U}{mathx}{\hyphenchar\font45}
\DeclareFontShape{U}{mathx}{m}{n}{
      <5> <6> <7> <8> <9> <10>
      <10.95> <12> <14.4> <17.28> <20.74> <24.88>
      mathx10
      }{}
\DeclareSymbolFont{mathx}{U}{mathx}{m}{n}
\DeclareMathSymbol{\bigtimes}{1}{mathx}{"91}
\definecolor{DarkRed}{rgb}{0.5,0.1,0.1}
\definecolor{DarkBlue}{rgb}{0.1,0.1,0.5}
\definecolor{ForestGreen}{rgb}{0.1333,0.5451,0.1333}
\definecolor{Red}{rgb}{0.9,0,0}
\crefname{property}{property}{Property}
\crefname{equation}{eq}{Eq}
\def\BState{\State\hskip-\ALG@thistlm}
\setlist[itemize]{leftmargin=20pt}
\setlist[enumerate]{leftmargin=20pt}
\newtheorem{theorem}{Theorem}
\newtheorem{lemma}{Lemma}[section]
\newtheorem{proposition}[lemma]{Proposition}
\newtheorem{fact}[lemma]{Fact}
\newtheorem{definition}[lemma]{Definition}
\newtheorem{problem}{Problem}
\newtheorem*{claim*}{Claim}
\newtheorem*{assumption*}{Assumption}
\newtheorem*{proposition*}{Proposition}
\newtheorem*{lemma*}{Lemma}
\newtheorem*{theorem*}{Theorem}
\crefname{lemma}{Lemma}{Lemmas}
\crefname{claim}{claim}{claims}
\crefname{property}{Property}{Properties}
\crefname{invariant}{Invariant}{Invariants}
\newtheorem{mdresult}{Result}
\newenvironment{result}{\begin{mdframed}[backgroundcolor=lightgray!40,topline=false,rightline=false,leftline=false,bottomline=false,innertopmargin=2pt]\begin{mdresult}}{\end{mdresult}\end{mdframed}}
\theoremstyle{definition}
\newtheorem*{mdproblem*}{Problem}
\newenvironment{Problem*}{\begin{mdframed}[hidealllines=false,innerleftmargin=10pt,backgroundcolor=gray!10,innertopmargin=5pt,innerbottommargin=5pt,roundcorner=10pt]\begin{mdproblem*}}{\end{mdproblem*}\end{mdframed}}
\newtheorem{mddefinition}[lemma]{Definition}
\newtheorem*{mddefinition*}{Definition}
\newenvironment{Definition*}{\begin{mdframed}[hidealllines=false,innerleftmargin=10pt,backgroundcolor=white!10,innertopmargin=5pt,innerbottommargin=5pt,roundcorner=10pt]\begin{mddefinition*}}{\end{mddefinition*}\end{mdframed}}
\newtheorem{mdremark}{Remark}
\newenvironment{ourbox}{\begin{mdframed}[hidealllines=false,innerleftmargin=10pt,backgroundcolor=white!10,innertopmargin=2pt,innerbottommargin=5pt,roundcorner=10pt]}{\end{mdframed}}
\newtheorem{mdalgorithm}{Algorithm}
\renewcommand{\qed}{\nobreak \ifvmode \relax \else
      \ifdim\lastskip<1.5em \hskip-\lastskip
      \hskip1.5em plus0em minus0.5em \fi \nobreak
      \vrule height0.75em width0.5em depth0.25em\fi}
\renewcommand{\leq}{\leqslant}
\renewcommand{\geq}{\geqslant}
\renewcommand{\le}{\leq}
\renewcommand{\ge}{\geq}
\newcommand{\Ot}{\ensuremath{\widetilde{O}}}
\newcommand{\poly}{\mbox{\rm poly}}
\DeclareMathOperator*{\Prob}{\ensuremath{\textnormal{Pr}}}
\renewcommand{\Pr}{\Prob}
\newenvironment{tbox}{\begin{tcolorbox}[
		enlarge top by=5pt,
		enlarge bottom by=5pt,
		 breakable,
		 boxsep=0pt,
                  left=4pt,
                  right=4pt,
                  top=10pt,
                  arc=0pt,
                  boxrule=1pt,toprule=1pt,
                  colback=white
                  ]
	}
{\end{tcolorbox}}
\newcommand{\supp}[1]{\ensuremath{\textnormal{\text{supp}}(#1)}}
\newcommand{\II}{\ensuremath{\mathbb{I}}}
\newcommand{\mireal}[1][]{
  \ifx\relax#1\relax%
    \II(\mione \,; \mitwo)%
  \else%
    \II(\mione \,; \mitwo\mid #1)%
  \fi
}
\title{Deterministic Independent Sets in the Semi-Streaming Model}
\author{Daniel Ye\footnote{(d29ye@uwaterloo.ca) School of Computer Science, University of Waterloo. Supported in part by Sepehr Assadi’s Sloan Research Fellowship and NSERC Discovery grant.}}
\date{February 8, 2025}
\begin{document}

\maketitle

\begin{abstract}
We consider the independent set problem in the semi-streaming model. For any input graph $G=(V, E)$ with $n$ vertices, an independent set is a set of vertices with no edges between any two elements. In the semi-streaming model, $G$ is presented as a stream of edges and any algorithm must use $\Ot(n)$\footnote{$\Ot(\cdot)$ is used to hide polylog factors.} bits of memory to output a large independent set at the end of the stream.

\smallskip
Prior work has designed various semi-streaming algorithms for finding independent sets. Due to the hardness of finding maximum and maximal independent sets in the semi-streaming model, the focus has primarily been on finding independent sets in terms of certain parameters, such as the maximum degree $\Delta$. In particular, there is a simple randomized algorithm that obtains independent sets of size $\frac n{\Delta+1}$ in expectation, which can also be achieved with high probability using more complicated algorithms. For deterministic algorithms, the best bounds are significantly weaker. In fact, the best we currently know is a straightforward algorithm that finds an $\tilde\Omega\left(\frac n{\Delta^2}\right)$ size independent set.

\smallskip
We show that this straightforward algorithm is nearly optimal by proving that any deterministic semi-streaming algorithm can only output an $\Ot\left(\frac n{\Delta^2}\right)$ size independent set. Our result proves a strong separation between the power of deterministic and randomized semi-streaming algorithms for the independent set problem.
\end{abstract}

\newpage
\section{Introduction}
Finding an independent set of a graph is a classical problem in graph theory with wide-ranging applications. An independent set of $G=(V, E)$ is a subset $I\subseteq V$ such that none of the vertices in $I$ have an edge between them. Being able to find large independent sets plays a crucial role in network scheduling, transportation management, and much more. Especially in recent years, there has been an increased need for handling massive graphs to solve various tasks. Consequently, there has been an increased interest in studying the independent set problem in modern models of computation, such as the semi-streaming model introduced in~\cite{feigenbaum2005graph}. In this model, $G$ is presented as a stream of edges and the storage space of the algorithm is bounded to $\Ot(n)$, where $n=|V|$. Our result pertains to the independent set problem under the semi-streaming model. 
\vspace{-0.05cm}

It is known that the Maximum Independent Set problem is both NP-hard\ \cite{karp2010reducibility} and hard-to-approximate to within a factor of $n^{1-\delta}$ for any $\delta>0$\ \cite{zuckerman2006linear}. Not only that, it is hard-to-approximate in the semi-streaming model, regardless of the time complexity of the algorithm\ \cite{halldorsson2012streaming}. As such, one line of research has focused on obtaining \textit{Maximal} Independent Set. In this respect,~\cite{ahn2015correlation} yields an $O(\log \log n)$-pass semi-streaming algorithm for this problem, and very recently,\ \cite{assadi2024log} proved this is the optimal number of passes.
\vspace{-0.05cm}

Hence, to study algorithms in even fewer passes, the problem must be further relaxed to finding ``combinatorially optimal'' independent sets. On one hand, we can try finding an optimal bound with respect to degree sequences --- the Caro-Wei theorem guarantees an independent set of size $\sum_{v}\frac 1{1+\deg(v)}$, which is optimal in the sense that there are many graphs that do not admit larger independent sets. To achieve this bound in the semi-streaming model,~\cite{halldorsson2010streaming} devises an algorithm for hypergraphs, which when applied to graphs, yields an expected independent set of size $\Omega\left(\sum_{v}\frac 1{\deg(v)}\right)$ with $O(n)$ memory and $O(1)$ time per edge. 
\vspace{-0.05cm}

On the other hand, if we consider bounds with respect to $n$ and $\Delta$ (the maximum degree in a graph), the Caro-Wei theorem guarantees independent sets of size $\frac{n}{\Delta+1}$, which is also tight. There is a very easy randomized semi-streaming algorithm for achieving this bound in expectation: randomly permute the vertices, and when an edge arrives in the stream, mark the endpoint that appears later. At the end, output all unmarked vertices. It is a standard result in graph theory that this algorithm outputs an independent set of size $\frac n{\Delta+1}$. We can even obtain such an independent set with high probability using the $(\Delta+1)$-coloring semi-streaming algorithm presented in~\cite{assadi2019sublinear}.
\vspace{-0.05cm}

Interestingly, these results all make heavy use of randomization. Thus, as with a plethora of other problems in the semi-streaming model, there is a particular interest in derandomizing algorithms to achieve similar performance. Some problems admit single-pass deterministic algorithms matching their randomized counterparts (e.g.\ connectivity and bipartiteness~\cite{ahn2012analyzing}, maximum matching~\cite{paz20182+}). However, others have a proven discrepancy between the theoretical space complexities of deterministic algorithms and randomized ones under this model (e.g.\ triangle counting~\cite{braverman2013hard} and vertex coloring~\cite{assadi2022deterministic}). 
\vspace{-0.05cm}

For the independent set problem, the state-of-the-art has been a fairly simple deterministic algorithm for finding an independent set of size $\frac n{\Delta^2}$. We begin by choosing $\frac n{\Delta}$ vertices arbitrarily, then store all the edges between them (resulting in $O\left(\frac n{\Delta}\cdot \Delta\right)=O(n)$ edges). We can then calculate a maximal independent set of this subgraph, resulting in an independent set of size $O\left(\frac n{\Delta^2}\right)$. There has been very little progress in constructing a deterministic algorithm that does better than this, which has led to the following open question:
\vspace{-0.3cm}
\begin{quote}
    \begin{center}
        \textit{Is there a single-pass semi-streaming deterministic algorithm that can match the performance of randomized ones? In particular, is there a deterministic algorithm that can find an independent set of size better than $\Ot\left(\frac n{\Delta^2}\right)$ in general graphs?}
    \end{center}
\end{quote}

\subsection{Our Contributions}

Our main result is a negative answer to the open question: the deterministic algorithm stated above is optimal up to polylog factors.

\begin{result}\label{main-result}

Any deterministic single-pass semi-streaming algorithm can only find an independent set of size at most $\Ot\left(\frac n{\Delta^2}\right)$ in a graph of maximum degree $\Delta$ (even when $\Delta$ is known). 
\end{result}

To the best of our knowledge, no space lower bounds have been devised for deterministic algorithms solving the independent set problem in the semi-streaming model. This result provides a lower bound that is tight up to logarithmic factors, illustrating another significant separation between deterministic and randomized algorithms in the semi-streaming model. 

\subsection{Our Techniques}\label{sec:our-technique}
We give a summary of our techniques here.

We begin our proof of the space lower bound in \Cref{main-result} with a similar setup to~\cite{assadi2022deterministic}, who derived a space lower bound for deterministic algorithms solving the vertex coloring problem. In our case, we consider the multi-party communication complexity of the Independent Set problem, where the edges of a graph are partitioned among some number of players. In a predefined order, each player may speak once (outputting $\Ot(n)$ bits) and is heard by all future players. The goal is for the last player to output a large independent set. 

We will design an adversary that adaptively constructs a graph that forces the algorithm to output an independent set of size $\Ot\left(\frac n{\Delta^2}\right)$. To do so, it is useful to consider the graph of non-edges (later referred to as the missing graph): the graph consisting of edges each player \textit{knows} has not been sent to them or any previous player. This model is useful because the independent set the last player outputs \textit{must} be a clique in their missing graph (otherwise, there would be some input that makes this algorithm incorrect). 

Additionally, we make use of the compression lemma of~\cite{assadi2022deterministic}. On any arbitrary graph, if we sample each edge with probability $p$, for any algorithm that compresses the result into $s$ bits, the compression lemma finds a summary such that at most $O\left(\frac sp\right)$ edges are not in any graph mapped to that summary. This is useful for\ \cite{assadi2022deterministic} as their adversary can narrow its search to a small set of vertices that a deterministic algorithm cannot summarize well. In fact, by focusing on vertices with low non-edge degree, it can sample each remaining edge with a higher probability to improve the bound given by the compression lemma. However, for the independent set problem, deterministic algorithms are free to choose vertices from \textit{any} section of the graph (and do not need to deal with \textit{every} vertex as in coloring). As such, our adversary does not have the luxury of searching for some useful set of vertices nor working with vertices with low non-edge degree. Instead, it must remove large independent sets globally from the graph and account for vertices that might not be easy to work with. To achieve this, our adversary generates graphs that are similar in structure to a Tur\'an graph. In particular, as a deterministic algorithm receives its edges, the overall graph will seem more like many densely-connected vertex-disjoint subgraphs. The key to this strategy is a new lemma in our paper that allows for destroying large independent sets (equivalently, cliques) by adding ``few'' edges (equivalently, removing in the case of cliques).

\subsection{Related Works}

A similar-in-spirit result for vertex coloring is proven in~\cite{assadi2022deterministic}, which shows that deterministic algorithms cannot color a graph with $\exp(\Delta^{o(1)})$ colors in a single pass. Since vertex coloring is fairly hard, designing an adversary entails finding \textit{some} set of vertices that is a clique from the perspective of a deterministic algorithm. With the independent set problem, however, we must design an adversary that removes \textit{all} large independent sets from the perspective of a deterministic algorithm. Due to this difference, deterministic algorithms can easily find independent sets of size $\frac n{\poly(\Delta)}$ in our setting (whereas no deterministic algorithm can find a coloring using at most $\poly(\Delta)$ colors). We show that they cannot do better than quadratic, up to a polylog factor. 

More generally, there has also been a significant interest in finding independent sets in graph streams\ \cite{ahn2015correlation, assadi2024log, halldorsson2010streaming, chen2023sublinear, cormode2017independent, bhore2022streaming, bakshi2019weighted}. Independent sets in the \textit{online streaming} model are studied in~\cite{halldorsson2016streaming}. Under this model, they devise a deterministic algorithm with performance ratio $O(2^{\Delta})$, which they prove is also tight. We provide an adjacent result in the \textit{semi-streaming} model, which does not require an algorithm to maintain a feasible solution at all times. Additionally,\ \cite{cormode2018independent} studied independent sets in vertex-arrival streams, where each element in the input stream is a vertex along with its incident edges to earlier vertices. They show that the maximum independent set problem in the vertex-arrival model is not much easier than the problem in the edge-streaming model. 

Independent sets have also been studied in more tangential settings.\ \cite{cormode2018approximating} studied the problem of finding the Caro-Wei bound itself that other algorithms (such as the one in\ \cite{halldorsson2010streaming}) achieve, and\ \cite{bhore2022streaming} studied the geometric independent set problem in the streaming model.

\section{Preliminaries}
\newcommand{\Partition}{\textnormal{\textsc{Partition}}}
\newcommand{\CP}{\mathcal{P}}

\textbf{Notation. } For an integer $k\ge 1$, we denote $[k]:=\{1,2,\dots,k\}$. For a tuple $(X_1, \dots, X_k)$ and any $i\in [k]$, we denote $X_{<i}$ as $(X_1, \dots, X_{i-1})$. For any distribution $\mu$, we will denote $\supp{\mu}$ as the support of $\mu$. 

For a graph $G=(V, E)$, we denote $\Delta{(G)}$ as the maximum degree of $G$, and for any $v\in V$, $\deg{(v)}$ as the degree of $v$ in $G$. For any vertex set $T\subseteq V$, we will denote the induced subgraph of $G$ on $T$ as $G[T]$. Often, we will also partition a set $S$ into a collection of subsets $\CP$ and a subset $Q$. When we use this language, we are saying that $\CP\cup \{Q\}$ is a partition of $S$. 

One part of our strategy also involves partitioning the vertices into many small subsets. For any integer $g\ge 1$, we will denote $\Partition(S, g)$ as an arbitrary partition of $S$ into subsets of size $g$, \textit{except} potentially for the last set, which has size $<g$. 
\begin{fact}\label{partition-count}
    For any set $S$ and $g\ge 1$, $|\Partition(S, g)|\le \left\lceil\frac{|S|}g\right\rceil$.
\end{fact}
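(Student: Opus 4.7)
The plan is to prove this by a direct counting argument, observing that the partition is essentially greedy: we peel off subsets of size $g$ from $S$ until fewer than $g$ elements remain, at which point the remaining elements (if any) form the last part.

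First I would consider two cases based on whether $g$ divides $|S|$. If $g \mid |S|$, then every subset in the partition has size exactly $g$, so the number of parts is precisely $|S|/g = \lceil |S|/g \rceil$. If $g \nmid |S|$, then we have $\lfloor |S|/g \rfloor$ subsets of size $g$ accounting for $g \cdot \lfloor |S|/g \rfloor$ elements, and one final subset of size $|S| - g\lfloor |S|/g \rfloor < g$ containing the remaining elements. The total number of parts is then $\lfloor |S|/g \rfloor + 1 = \lceil |S|/g \rceil$.

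In both cases we get $|\Partition(S, g)| = \lceil |S|/g \rceil$, which in particular yields the claimed inequality. Alternatively, one can avoid the case split by simply noting that each of the parts except possibly the last contains exactly $g$ elements, so if there are $k$ parts then $|S| \geq g(k-1) + 1$, which gives $k \leq (|S|-1)/g + 1 \leq \lceil |S|/g \rceil$.

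There is no real obstacle here; the fact is essentially a definitional unpacking of what it means to partition a set into blocks of size $g$ (with a possibly smaller final block). The only mild care needed is handling the edge case $|S| = 0$, in which case $\Partition(S, g)$ is empty and $\lceil 0/g \rceil = 0$, so the bound still holds.
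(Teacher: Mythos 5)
Your argument is correct and is exactly the definitional unpacking the paper has in mind (the paper states this fact without proof, treating it as immediate from the definition of $\Partition(S,g)$): all parts have size $g$ except possibly the last, giving $\lfloor |S|/g\rfloor$ or $\lfloor |S|/g\rfloor+1$ parts, i.e.\ $\lceil |S|/g\rceil$ in either case. One tiny caveat in your alternative one-line argument: the chain $k \le (|S|-1)/g + 1 \le \lceil |S|/g\rceil$ is not valid as an inequality of reals (e.g.\ $|S|=4$, $g=2$ gives $2.5 \not\le 2$), but the conclusion still follows since $k$ is an integer, so $k \le \lfloor (|S|-1)/g\rfloor + 1 = \lceil |S|/g\rceil$.
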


Finally, we use the following standard Chernoff bound:

\begin{proposition}[Chernoff bound; c.f.~\cite{dubhashi2009concentration}]\label{chernoff}
    Suppose $X_1, \dots, X_m$ are $m$ independent random variables in the range $[0, 1]$. Let $X:=\sum_{i=1}^m X_i$ and $\mu_L\le \mathbb{E}[X]\le \mu_H$. Then, for any $\epsilon > 0$,
    \begin{align*}
        \Pr(X > (1 + \epsilon)\cdot \mu_H)\le\exp\left(-\frac{\epsilon^2\cdot\mu_H}{3+\epsilon}\right)~\mathrm{and}~\Pr\left(X < (1-\epsilon)\cdot \mu_L\right)\le\exp\left(-\frac{\epsilon^2\cdot\mu_L}{2+\epsilon}\right).
    \end{align*}
\end{proposition}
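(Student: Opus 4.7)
The plan is to prove the bounds by the classical exponential-moment (Chernoff's) method. The conceptual part is standard; the main obstacle is purely algebraic, namely verifying the two elementary one-variable inequalities needed to massage the raw tail into the stated forms $\exp(-\epsilon^2\mu_H/(3+\epsilon))$ and $\exp(-\epsilon^2\mu_L/(2+\epsilon))$.

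First I would bound the moment generating function of each $X_i$. Since $X_i \in [0,1]$, convexity of $e^{tx}$ on this interval gives $e^{tX_i} \le 1 - X_i + X_i e^t$, and taking expectations yields $\mathbb{E}[e^{tX_i}] \le 1 + p_i(e^t - 1) \le \exp(p_i(e^t - 1))$, where $p_i := \mathbb{E}[X_i]$. By independence, $\mathbb{E}[e^{tX}] \le \exp(\mu(e^t - 1))$ with $\mu := \mathbb{E}[X]$.

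For the upper tail, I take $t > 0$, so that $e^t - 1 > 0$ and the hypothesis $\mu \le \mu_H$ gives $\mathbb{E}[e^{tX}] \le \exp(\mu_H(e^t-1))$. Markov's inequality then yields
\[
\Pr(X > (1+\epsilon)\mu_H) \;\le\; \exp\Bigl(\mu_H\bigl(e^t - 1 - t(1+\epsilon)\bigr)\Bigr).
\]
The optimal choice $t = \ln(1+\epsilon)$ produces $\exp(-\mu_H\cdot f(\epsilon))$ with $f(\epsilon) := (1+\epsilon)\ln(1+\epsilon) - \epsilon$. It remains to verify the elementary inequality $f(\epsilon) \ge \epsilon^2/(3+\epsilon)$ for every $\epsilon > 0$; this follows by comparing derivatives of the two sides (both vanish at $\epsilon = 0$) or by a direct Taylor-series argument.

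The lower tail is symmetric. For $t > 0$, Markov gives $\Pr(X < (1-\epsilon)\mu_L) \le e^{t(1-\epsilon)\mu_L}\,\mathbb{E}[e^{-tX}]$, and the MGF bound yields $\mathbb{E}[e^{-tX}] \le \exp(\mu(e^{-t}-1)) \le \exp(\mu_L(e^{-t}-1))$, where the last step uses $\mu \ge \mu_L$ together with $e^{-t}-1 < 0$. Choosing the optimum $t = -\ln(1-\epsilon)$ produces $\exp(-\mu_L\cdot g(\epsilon))$ with $g(\epsilon) := \epsilon + (1-\epsilon)\ln(1-\epsilon)$, which one sees is positive on $\epsilon \in (0,1)$ by Taylor expansion. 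Verifying the elementary inequality $g(\epsilon) \ge \epsilon^2/(2+\epsilon)$ on this range then completes the proof.
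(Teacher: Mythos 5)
Your proof is correct and is exactly the standard exponential-moment (Chernoff) argument; note that the paper does not prove this proposition at all---it is imported from the cited textbook, which derives it in essentially the same way. The only loose ends are routine: the deferred elementary inequalities do hold (e.g.\ via $(1+\epsilon)\ln(1+\epsilon)-\epsilon\ge\frac{\epsilon^2}{2+2\epsilon/3}\ge\frac{\epsilon^2}{3+\epsilon}$ for the upper tail and $\epsilon+(1-\epsilon)\ln(1-\epsilon)\ge\frac{\epsilon^2}{2}\ge\frac{\epsilon^2}{2+\epsilon}$ for the lower tail), and since the statement allows any $\epsilon>0$ you should add one line observing that for $\epsilon\ge 1$ the lower-tail bound is trivial (the event $X<(1-\epsilon)\mu_L$ is vacuous when $\mu_L\ge 0$ because $X\ge 0$, and the right-hand side is at least $1$ when $\mu_L<0$), so restricting your optimization of $t=-\ln(1-\epsilon)$ to $\epsilon\in(0,1)$ loses nothing.
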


\subsection{The Communication Complexity of Independent Sets}
We prove our space lower bound in \Cref{main-result} through a communication complexity argument in the following communication game, which as stated in \Cref{sec:our-technique}, is defined similarly to~\cite{assadi2022deterministic}.

\newcommand{\IndepSet}{\textnormal{\textsc{Independent-Set}}}

\begin{mdframed}
    For integers $n, \Delta, k, s\ge 1$, the $\IndepSet(n, \Delta, k, s)$ game is defined as:
    \begin{enumerate}
        \item There are $k$ players $P_1, \dots, P_k$. Each Player $P_i$ knows the vertex set $V$ and receives a set $E_i$ of edges. Let $G=(V, E)$, where $E=E_1\cup \dots\cup E_k$ and players are guaranteed $E_1,\dots, E_k$ are disjoint. Players are guaranteed $\Delta(G)\le \Delta$, and their goal is to output an Independent set of $G$.
        \item In order from $i=1$ to $i=k$, each player $P_i$ writes a public message $M_i$ based on $E_i$ and $M_{<i}$ (all the messages from the previous players) of length at most $s$. 
        \item The goal of the players is to output an independent set of $G$ by $P_k$ outputting it as the message $M_k$. 
    \end{enumerate}
\end{mdframed}

The following is standard:

\begin{lemma}\label{algo-to-players}
    Suppose there is a deterministic streaming algorithm that, on any $n$-vertex graph $G$ with known maximum degree $\Delta$, outputs an independent set of $G$ with size $r$ using $s$ bits of space. Then, there also exists a deterministic protocol for $\IndepSet(n,\Delta, k, s)$ that outputs an independent set of size $r$.
\end{lemma}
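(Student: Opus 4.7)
The plan is to argue this by a direct simulation of the streaming algorithm across the $k$ players, which is the textbook reduction from streaming to one-way multi-party communication. Let $\mathcal{A}$ denote the hypothesized deterministic semi-streaming algorithm with space budget $s$. Since $\mathcal{A}$ is deterministic, its memory state after reading any prefix of the edge stream is a deterministic function of that prefix, and at every moment it can be encoded in at most $s$ bits.

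The protocol I would describe is as follows. Player $P_1$ initializes $\mathcal{A}$ with its starting state, feeds the edges of $E_1$ into $\mathcal{A}$ in an arbitrary order, and then publishes the resulting $s$-bit memory contents as $M_1$. For $i = 2, \dots, k$, player $P_i$ reads $M_{i-1}$, loads it into $\mathcal{A}$ as the current memory state, feeds the edges of $E_i$ to $\mathcal{A}$ in an arbitrary order, and (for $i < k$) publishes the new $s$-bit memory as $M_i$. Player $P_k$ additionally invokes the output routine of $\mathcal{A}$ on the final memory state and publishes the resulting vertex set as $M_k$. Because $\mathcal{A}$ uses at most $s$ bits of memory at every point of its execution, each $M_i$ fits within the $s$-bit message budget of the game.

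For correctness, I would observe that the transcript of $\mathcal{A}$ simulated across the players is identical to the transcript of $\mathcal{A}$ run in a single pass over the concatenated stream $E_1, E_2, \dots, E_k$ (presented in the order each player used internally). This is a legal edge stream of $G = (V, E_1 \cup \cdots \cup E_k)$ with $\Delta(G) \le \Delta$, so by the guarantee on $\mathcal{A}$ the final output is an independent set of $G$ of size $r$. That output is precisely $M_k$, completing the protocol.

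I do not expect any real obstacle here: the reduction is completely mechanical and uses only the fact that a deterministic streaming algorithm's state depends solely on the sequence of inputs processed so far, together with the observation that the player order in the game mirrors the stream order. The only mildly subtle point worth mentioning explicitly is that the players need not know $E_{<i}$, since everything they need about those edges is summarized in $M_{i-1}$; this is why having $s$-bit messages equal to the streaming algorithm's space bound suffices.
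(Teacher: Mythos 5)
Your proposal is correct and matches the paper's proof, which is exactly this simulation: each player continues the streaming algorithm from the memory state posted by the previous player and the last player publishes the algorithm's output. No further comment is needed.
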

\begin{proof}
The players simply run the streaming algorithm on their input by writing the content of the memory of the algorithm from one player to the next on the blackboard, so that the next player can continue running the algorithm on their input. At the end, the last player computes the output of the streaming algorithm and writes it on the blackboard.
\end{proof}

\subsection{The Missing Graph and Compression Lemma}

As stated in \Cref{sec:our-technique}, we use the compression lemma in~\cite{assadi2022deterministic}. We begin with two definitions that are similar to~\cite{assadi2022deterministic}.

\begin{definition}
    For a base graph $G_{Base}=(V, E_{Base})$ and parameters $p\in (0, 1]$, $d\ge 1$, we define the random graph distribution $\mathbb{G}=\mathbb{G}(G_{Base}, p, d)$ as follows:
    \begin{enumerate}
        \item Sample a graph $G$ on vertices $V$ and edges $E$ by picking each edge of $E_{Base}$ independently with probability $p$. 
        \item Return $G$ if $\Delta{(G)}< 2p\cdot d$. Otherwise, repeat the process.
    \end{enumerate}
\end{definition}

To analyze arbitrary deterministic algorithms, we will often consider the compression algorithm associated with it. We will represent the ``information'' available to the algorithm as a ``missing graph'', which we define here:
\begin{definition}
    Consider $\mathbb{G}(G_{Base}, p, d)$ for a base graph $G_{Base}=(V, E_{Base})$ and parameters $p\in(0, 1]$, $d\ge 1$, and an integer $s\ge 1$. A compression algorithm with size s is any function $\Phi : \supp{\mathbb{G}} \rightarrow {\{0, 1\}}^s$ that maps graphs sampled from $\mathbb{G}$ into $s$-bit strings. For any graph $G\in \supp{\mathbb{G}}$, we refer to $\Phi(G)$ as the summary of G. For any summary $\Phi\in{\{0,1\}}^s$, we define:
    \begin{enumerate}
        \item $\mathbb{G}_{\phi}$ as the distribution of graphs mapped to $\phi$ by $\Phi$. 
        \item $G_{Miss}(\phi)=(V, E_{Miss}(\phi))$, called the missing graph of $\phi$, as the graph on vertices $V$ and edges missed by all graphs in $\mathbb{G}_{\phi}$. 
    \end{enumerate}
\end{definition}

The previous definitions are used extensively in our work. The following lemma also plays a crucial role in our communication lower bound, bounding the number of conclusive missing edges that can be recovered from a compression algorithm of a given size. It is proven in~\cite{assadi2022deterministic}.

\begin{lemma}[Compression Lemma]\label{compression}
Let $G_{Base}=(V, E_{Base})$ be an $n$-vertex graph, $s\ge 1$ be an integer, and $p\in (0, 1)$ and $d\ge 1$ be parameters such that $d\ge \max\{\Delta{(G_{Base})}, \frac {4\ln(2n)}p\}$. Consider the distribution $\mathbb{G}:=\mathbb{G}(G_{Base}, p, d)$ and suppose $\Phi:\supp{\mathbb{G}}\rightarrow {\{0, 1\}}^s$ is a compression algorithm of size $s$ for $\mathbb{G}$. Then, there exists a summary $\phi^*\in {\{0, 1\}}^s$ such that in the missing graph of $\phi^*$,
\begin{align*}
    |E_{Miss}(\phi^*)|\le \frac{\ln 2\cdot (s+1)}p.
\end{align*}
\end{lemma}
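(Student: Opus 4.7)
The plan is to combine a pigeonhole argument over summaries with a simple independence-based probability bound. Let $\mathbb{G}'$ denote the ``raw'' distribution where each edge of $E_{Base}$ is included independently with probability $p$, so that $\mathbb{G}$ is $\mathbb{G}'$ conditioned on the max-degree event $A := \{\Delta(G) < 2pd\}$; write $\eta := \Pr_{\mathbb{G}'}[A]$. My first step will be to use \Cref{chernoff} to show $\eta \ge 1/2$: for any vertex $v$, $\deg_G(v)$ under $\mathbb{G}'$ is a sum of at most $\Delta(G_{Base}) \le d$ independent $\textnormal{Bernoulli}(p)$ variables with expectation at most $pd$, so taking $\mu_H = pd$ and $\epsilon = 1$ yields $\Pr[\deg_G(v) \ge 2pd] \le \exp(-pd/4) \le 1/(2n)$ by the assumption $d \ge 4\ln(2n)/p$, and a union bound over the $n$ vertices gives $\Pr_{\mathbb{G}'}[\overline{A}] \le 1/2$.

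Next, I would pick a ``heavy'' summary: since $\Phi$ has at most $2^s$ distinct values and the probabilities $q_\phi := \Pr_{\mathbb{G}}[\Phi(G) = \phi]$ sum to $1$, pigeonhole yields $\phi^*$ with $q_{\phi^*} \ge 2^{-s}$. Setting $S := \Phi^{-1}(\phi^*) \subseteq \supp{\mathbb{G}} \subseteq A$, the definition of conditional probability gives
\[
\Pr_{\mathbb{G}'}[G \in S] \;=\; \eta \cdot q_{\phi^*} \;\ge\; \frac{1}{2^{s+1}}.
\]

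The heart of the proof is an independent, exponential \emph{upper} bound on the same probability through $E_{Miss}(\phi^*)$. By definition of the missing graph, every $G \in S$ avoids all edges of $E_{Miss}(\phi^*)$; since $\mathbb{G}'$ is a product distribution over the edges of $E_{Base}$, the event $\{G \in S\}$ is contained in $\{G \cap E_{Miss}(\phi^*) = \emptyset\}$, so
\[
\Pr_{\mathbb{G}'}[G \in S] \;\le\; (1-p)^{|E_{Miss}(\phi^*)|}.
\]
Combining with the previous display, taking natural logs, and using $\ln(1/(1-p)) \ge p$ for $p \in (0,1)$ gives $p \cdot |E_{Miss}(\phi^*)| \le (s+1)\ln 2$, which rearranges to the claimed bound.

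The hard part will be identifying this last step: a naive entropy-based argument bounds the information gained per missing edge only by the binary entropy $h(p) = \Theta(p \log(1/p))$, which would yield the weaker $|E_{Miss}| = O(s/(p\log(1/p)))$. The key observation is that a missing edge contributes a full multiplicative factor of $(1-p)$ to $\Pr_{\mathbb{G}'}[G \in S]$, which is stronger than its entropy contribution and relies crucially on the product structure of $\mathbb{G}'$. Everything else is routine; it is worth noting that the $(s+1)$ factor in the final bound tracks exactly the one extra bit absorbed by $\eta \ge 1/2$, so the constants chosen for $d$ in the hypothesis are tight up to the exact constant in the conclusion.
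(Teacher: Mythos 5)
Your proof is correct: the rejection-sampling step is handled properly via the Chernoff bound and the $d\ge 4\ln(2n)/p$ hypothesis (giving $\eta\ge 1/2$), the pigeonhole over the $2^s$ summaries gives the $2^{-(s+1)}$ lower bound, and the product-structure upper bound $(1-p)^{|E_{Miss}(\phi^*)|}$ together with $\ln\frac{1}{1-p}\ge p$ yields exactly the claimed bound. The paper itself does not reprove \Cref{compression} but cites~\cite{assadi2022deterministic}, and your argument is essentially the same as the one given there, so there is nothing to add.
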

\section{Removing Cliques in the Missing Graph}

\indent In this section, we introduce a key tool used by our adversary. More specifically, we develop a procedure that removes large cliques in the missing graph: the graph consisting of edges that an algorithm \textit{knows} are not in the input graph. This is useful because a deterministic algorithm that finds independent sets must be \textit{certain} that none of the vertices in its output have an edge between them, which creates a clique of the same size in its missing graph. To do so, we will design our adversary to generate a Tur\'an-type graph, as mentioned in \Cref{sec:our-technique}.

\subsection{Removing Cliques in Low-Degree Graphs}
The following lemma provides a method to bound the largest size of a clique when the degree is bounded by removing a small number of edges.

\begin{lemma}\label{clique-removal}
Let $G$ be a graph with $n$ vertices such that $\Delta{(G)}\le \Delta$. For any positive integer $d$, there is some subgraph $H$ of $G$ such that:
\begin{enumerate}
    \item\label{S-degree} The degree of $H$ is $\le d$.
    \item\label{clique-size} The largest clique in $G-H$ has size $\le 16\ln(n)\cdot \frac {\Delta}d + 10$.
\end{enumerate}
\end{lemma}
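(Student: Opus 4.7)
The plan is to construct $H$ probabilistically: form $H$ by including each edge of $G$ independently with probability $q := d/(2\Delta)$, and argue that with positive probability $H$ satisfies both \cref{S-degree,clique-size} simultaneously. The argument will split into three regimes for $d$. If $d \ge \Delta$, I would take $H = G$, so $\Delta(H)\le\Delta\le d$ and $G-H$ has no edges. If $d$ is very small (say $d \le 8\ln(2n)$), I would take $H = \emptyset$: then $\Delta(H) = 0\le d$, and the largest clique in $G$ has size at most $\Delta+1 \le 16\ln(n)\cdot\Delta/d + 10$, using that $16\ln(n)/d \ge 1$ in this regime (away from trivial corner cases like $n=1$). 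The genuine work is the main regime $8\ln(2n) < d < \Delta$.

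In the main regime, set $q = d/(2\Delta)\in(0,1/2)$ and $k^* := \lceil 16\ln(n)\cdot \Delta/d + 10\rceil$; the plan is to show
\begin{align*}
    \Pr\bigl[\exists v\in V:\ \deg_H(v) > d\bigr] + \Pr\bigl[G-H \text{ contains a clique of size } k^*\bigr] < 1.
\end{align*}
This produces an $H$ meeting both conditions, since any clique larger than $k^*$ contains a $k^*$-subclique. For the degree term, $\deg_H(v)$ is a sum of at most $\Delta$ independent Bernoulli-$q$ indicators with mean at most $q\Delta = d/2$, so \cref{chernoff} applied with $\epsilon = 1$ gives $\Pr[\deg_H(v)>d] \le \exp(-d/8)$; a union bound over the $n$ vertices then bounds this term by $n\exp(-d/8) < 1/2$ since $d > 8\ln(2n)$. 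For the clique term, every $k^*$-clique of $G-H$ is in particular a $k^*$-clique of $G$ whose $\binom{k^*}{2}$ edges were all excluded from $H$, so Markov's inequality bounds its probability by
\begin{align*}
    \binom{n}{k^*}\,(1-q)^{\binom{k^*}{2}} \le \exp\!\Bigl(k^*\ln n - q\binom{k^*}{2}\Bigr).
\end{align*}
Substituting $k^*-1 \ge 16\ln(n)\cdot\Delta/d$ and $q=d/(2\Delta)$ yields $q\binom{k^*}{2} \ge 4k^*\ln n$, so the exponent is at most $-3k^*\ln n$, comfortably below $-\ln 2$.

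The main obstacle will be bookkeeping the advertised constants $16$ and $+10$ across all three regimes: the $+10$ absorbs the ceiling losses in $k^*$ and keeps the bound meaningful when $16\ln(n)\cdot\Delta/d$ is small or when $d$ approaches $\Delta$, while the multiplicative $16$ must be large enough that $q\binom{k^*}{2}$ dominates $k^*\ln n$ with room for the additive $\ln\binom{n}{k^*}$ and $\ln 2$ slack in the exponent. Beyond verifying that the three regime thresholds fit together consistently, I do not anticipate any conceptual difficulty.
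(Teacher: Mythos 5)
Your proposal is correct and follows essentially the same route as the paper's proof: sample each edge independently with probability $d/(2\Delta)$, handle the extreme regimes of $d$ by taking $H=G$ or $H=\emptyset$, bound degrees via the Chernoff bound and cliques via a first-moment union bound over $k$-subsets, and conclude by showing the two failure probabilities sum to less than $1$. The only differences are cosmetic (your threshold $8\ln(2n)$ and $1/2+1/2$ accounting versus the paper's $16\ln n$, $1/n+1/n^2$, and separate $n\le 10$ case), and your constant bookkeeping, e.g.\ $q\binom{k^*}{2}\ge 4k^*\ln n$, checks out.
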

\begin{proof}
    We will use a probabilistic argument.
    
    Firstly, if $d\le 16\ln (n)$, then we let $H$ be the empty graph. Since $\Delta{(G)}\le \Delta$, the largest clique in $G$ has size $\le \Delta$. Indeed, $16\ln (n)\cdot \frac {\Delta}d+10\ge 16\ln(n)\cdot \frac \Delta{16\ln(n)}=\Delta$. 

    Similarly, if $\Delta\le d$, then take $H=G$. The largest clique size is $1$ and $\Delta{(H)}\le \Delta\le d$. Finally, if $n\le 10$, then the largest clique size is at most $10$, so we can let $S$ be the empty graph as well. 
    
    It remains to prove the lemma for $n>10$ and $16\ln(n)<d<\Delta$. We choose $S$ by sampling each edge in $G$ with probability $\frac {d}{2\Delta}$. For any vertex $v$, we let $X_1,\dots,X_{\deg{(v)}}$ be indicator variables for whether an incident edge is chosen. If we let $X:=\sum_{i=1}^{\deg{(v)}} X_i$, then $\mathbb{E}[X]\le \frac {d}2$. By \Cref{chernoff} with $\epsilon=1$, 
    \begin{align*}
        \Pr(X>d)\le \Pr\left(X>2\cdot\mathbb{E}[X]\right)\le\exp\left(-\frac{\frac{d}{2}}{4}\right)<\exp\left(-\frac {16\ln n}{8}\right) =\exp(-2\ln n).
    \end{align*}

    Hence, by a union bound, $\Pr(\text{Some vertex in $H$ has degree greater than $d$})$ does not exceed
    \begin{align*}
        n\cdot \exp\left(-2\ln n\right)=\exp(\ln n-2\ln n)=\exp (-\ln n)=\frac 1n.
    \end{align*}
    
    For any group of $16\ln(n)\cdot \frac {\Delta}d+10<k$ vertices, if it is not a clique in $G$, it definitely will not be a clique after removing the edges in $H$. Otherwise, it can only be a clique if none of the edges are selected. For any $k$-subset, the probability of this happening is:
    \begin{align*}
        \Pr(\text{this $k$-subset is a clique})&={\left(1-\frac {d}{2\Delta}\right)}^{k\cdot(k-1)/2}\\
        &\le\exp\left(-\frac d{2\Delta}\cdot \frac{k(k-1)}2\right)\\
        &<\exp\left(-\frac{d}{2\Delta}\cdot \left(16\ln n\cdot \frac {\Delta}d\right)\cdot \frac{k-1}2\right)\\
        &=\exp\left(-4\ln n\cdot (k-1)\right).
    \end{align*}

    Now, we apply a union bound over all k-subsets of vertices:
    \begin{align*}
        \Pr(G-H\text{ has a k-clique})&\le \binom nk\cdot \exp\left(-4\ln n\cdot (k-1)\right)\\
        &\le {\left(\frac{en}k\right)}^k \cdot \exp\left(-4\ln n\cdot (k-1)\right)\\
        &=\exp(k\ln(en/k)-4\ln(n)\cdot (k-1))\\
        &=\exp(k(1+\ln(n)-\ln(k)-4\ln(n)) + 4\ln n)\\
        &\le \exp(k(1-3\ln(n))+4\ln(n))\\
        &\le \exp(-2k\ln n + 4\ln(n))\hspace{5em}\tag{Since $\ln(n)\ge1$}\\
        &\le \exp\left(-6\ln(n)+4\ln(n)\right)\hspace{4em}\tag{Sub $k=16\ln(n)\cdot \frac{\Delta}d\ge3$}\\
        &=\frac 1{n^2}.
    \end{align*}

    Finally, we calculate the probability that this procedure yields a subgraph $H$ satisfying \Cref{S-degree} and \Cref{clique-size}. Through a union bound on the complement, when $n>10$, we get
    \begin{align*}
        \Pr(H\text{ does not satisfy \Cref{S-degree} or \Cref{clique-size}})\le \frac1{10}+\frac1{100}<1.
    \end{align*} 
    Thus, there is some $H$ satisfying the two conditions. 
\end{proof}

\subsection{Removing Cliques in General Graphs}
\noindent
To use \Cref{clique-removal}, we need graphs of low maximum degree. However, the $\hyperref[compression]{\text{Compression Lemma}}$ can only bound the total number of edges. Hence, to employ \Cref{clique-removal}, we will partition a graph with few edges into many subgraphs with low degree and a small ``remainder'' subgraph. We will start with the following definition:

\newcommand{\graphSplit}{\textnormal{\textsc{Split}}}

\begin{definition}
    For a positive integer $b$ and a graph $G=(V, E)$ with $m$ edges, we define $\graphSplit(G, b)$ as a partition of $V$ into a pair of vertex sets $(P, Q)$ such that $\Delta{(G[P])}\le b$ and $|Q|\le \frac{2m}b$.
\end{definition}

\noindent
The following proposition ensures that a split exists:

\begin{proposition}\label{partition-degree-bound}
Suppose we are given a graph $G=(V, E)$ with $n$ vertices and $m$ edges. For every positive integer $b$, $\graphSplit(G, b)$ exists.
\end{proposition}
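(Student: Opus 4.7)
The plan is to construct the split greedily by iteratively peeling off high-degree vertices, in the spirit of standard degeneracy arguments. Start with $P := V$ and $Q := \emptyset$, and repeat the following step: while there exists some $v \in P$ with $\deg_{G[P]}(v) > b$, move $v$ from $P$ to $Q$. Since $V$ is finite, the process terminates, and at termination every remaining vertex has degree at most $b$ in $G[P]$, so $\Delta(G[P]) \le b$ holds automatically. The only thing that requires work is the cardinality bound on $Q$.

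For the bound on $|Q|$, I would use a charging argument over edges. Let $v_1, v_2, \ldots, v_k$ be the vertices moved to $Q$, in the order they are moved. At the moment $v_i$ is moved, the edges incident to $v_i$ in the then-current $G[P]$ are precisely the edges that become ``lost'' from the induced subgraph in this step, and there are strictly more than $b$ of them. Crucially, each edge of $G$ is charged to at most one such step: namely, to the step at which the first of its two endpoints to be placed into $Q$ is moved (if no endpoint is ever moved, the edge is never charged). Summing, the total number of edges charged is at least $k \cdot (b+1) > kb$, but it is at most $|E| = m$. Hence $kb < m$, which yields $|Q| = k < m/b \le 2m/b$, well within the required slack.

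Putting these two facts together, $(P, Q)$ is a valid $\graphSplit(G, b)$: it partitions $V$, the induced subgraph on $P$ has maximum degree at most $b$, and $|Q| \le 2m/b$. There is no real obstacle in this argument; the only minor care needed is making sure the edge charging is done by the first endpoint to move, so that edges inside $Q$ are not double-counted. The factor of $2$ in the definition is generous compared to what the greedy process actually achieves, so the statement holds with room to spare.
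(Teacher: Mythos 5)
Your proof is correct. It takes a mildly different route from the paper: you peel off high-degree vertices iteratively (a degeneracy-style argument), recomputing degrees inside the shrinking $P$, and bound $|Q|$ by charging each removal step to at least $b+1$ previously uncharged edges, which yields the stronger bound $|Q| \le m/(b+1) < m/b$. The paper instead makes the split in one shot, taking $Q := \{v \in V : \deg_G(v) > b\}$ with degrees measured in the whole graph $G$; the degree bound on $G[P]$ is then immediate, and $|Q| \le 2m/b$ follows in one line from the handshake identity $2m = \sum_v \deg(v) \ge \sum_{v \in Q} \deg(v) \ge |Q|\cdot b$. Your peeled set $Q$ is contained in the paper's static $Q$ (a vertex of global degree at most $b$ can never exceed $b$ in an induced subgraph), so your construction is never worse and the charging gives a slightly better constant; the paper's version buys brevity, needing no process, no termination argument, and no care about double-counting, which is all that is required since only the generous $2m/b$ bound is used downstream (in \Cref{compressed-partition}).
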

\begin{proof}
Let $P:=\{v\in V:\deg{(v)}\le b\}$. Then, $Q:=V\backslash P$. This is clearly a partition, since each vertex is in exactly one of $P$ or $Q$. 

We will now prove the bound on $G[P]$. For every vertex $v\in P$, $\deg{(v)}\le b$ in $G$. Since $G[P]$ is a subgraph of $G$, $\deg{(v)}\le b$ in $G[P]$ as well. Hence, $\Delta{(G[P])}\le b$. 

Now for the bound on $|Q|$. It is known that the number of edges is equal to half the sum of the degrees. Hence, we have 
\begin{align*}
    m=\frac12\sum_{v\in V}\deg{(v)}\ge \frac12\sum_{v\in V\backslash P}\deg{(v)}\ge \frac 12\sum_{v\in Q} b=\frac 12\cdot |Q|\cdot b.
\end{align*}
Then, we rearrange to get $|Q|\le \frac{2m}b$. 
\end{proof}

To ``remove'' many large cliques in an arbitrary graph, we will run a two-step subalgorithm. 

The first step of the subalgorithm involves finding a subgraph of our input graph $G_{Base}$, which we denote as $H_{Base}$. The graph $H_{Base}$ is chosen such that, for any algorithm compressing it, we can easily bound the number of edges in $H_{Miss}(\phi)$ (the missing graph of $H_{Base}$ for some message $\phi$) using the \hyperref[compression]{Compression Lemma}. We will then prove that we can partition the vertices of $H_{Miss}(\phi)$ (which are the same as the vertices of $G_{Base}$) into a collection of vertex sets $\CP$ and a vertex set $Q$ such that the maximum degree in $\left(H_{Miss}(\phi)\right)[P]$ is small for all $P\in \CP$ and the size of $Q$ is small. 

\begin{lemma}\label{compressed-partition}
Let $G_{Base}=(V, E_{Base})$ be a graph with $n$ vertices. Let $g\ge 1$ (group size) and $s\ge 1$ (message size) be arbitrary integers. Let $d_{comp}\ge4\ln(2n)$ (compression degree) and $d_{filter}\ge 1$ (filter degree) be arbitrary real numbers.

Then, there is a subgraph $H_{Base}\subseteq G_{Base}$ and a distribution $\mathbb{H}:=\mathbb{G}(H_{Base}, p, d)$ (for some $p$ and $d$) such that, for every compression algorithm $\Phi:\supp{\mathbb{H}}\rightarrow {\{0, 1 \}}^s$, we can find a message $\phi$ and a partition of $V$ into a collection of vertex sets $\CP$ and a vertex set $Q$ satisfying:
\begin{enumerate}
    \item\label{H-degree} For all $H\in \supp{\mathbb{H}}$, $\Delta{(H)}\le 2\cdot d_{comp}$.
    \item\label{edge-in} For all $P\in \CP$, $G_{Base}[P]=H_{Base}[P]$.
    \item\label{P-condition} For any vertex set $P\in \CP$, $\Delta{(H_{Miss}(\phi)[P])}\le d_{filter}$. 
    \item\label{P-size} The size of $\CP$ is $\le\lceil\frac ng\rceil$. 
    \item\label{Q-condition} The size of $Q$ is $\le \dfrac{2\ln (2)\cdot (s+1)\cdot g}{d_{comp}\cdot d_{filter}}$.
\end{enumerate}
\end{lemma}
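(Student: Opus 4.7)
The plan is to (i) restrict $G_{Base}$ to the edges that lie inside groups of a fixed size-$g$ partition, (ii) tune the sampling probability so the \hyperref[compression]{Compression Lemma} gives a tight bound on the total number of missing edges, and (iii) apply the Split routine from \Cref{partition-degree-bound} within each group to carve out $\CP$ and $Q$.

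Concretely, start with $\CP_0 := \Partition(V, g)$, which satisfies $|\CP_0| \le \lceil n/g \rceil$ by \Cref{partition-count}. Define $H_{Base} \subseteq G_{Base}$ to be the subgraph obtained by keeping exactly the edges of $G_{Base}$ whose two endpoints share a common part of $\CP_0$ (all between-part edges are dropped). Since every vertex belongs to a unique part of size at most $g$, we immediately get $\Delta(H_{Base}) \le g-1$ and $H_{Base}[P_0] = G_{Base}[P_0]$ for every $P_0 \in \CP_0$. Now set $p := d_{comp}/g$ and $d := g$, and let $\mathbb{H} := \mathbb{G}(H_{Base}, p, d)$. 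In the main regime $g > d_{comp}$ this gives $p \in (0,1)$; the boundary case $g \le d_{comp}$ is handled trivially by taking $p=1$ (then $\mathbb{H}$ is a point mass on $H_{Base}$, the missing graph is empty, and we may set $\CP := \CP_0$, $Q := \emptyset$). Property \Cref{H-degree} is immediate from the rejection threshold $2pd = 2 d_{comp}$.

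Next, verify the Compression Lemma hypotheses: $d = g > \Delta(H_{Base})$, and $pd = d_{comp} \ge 4\ln(2n)$ gives $d \ge 4\ln(2n)/p$. So for the given $\Phi$ of size $s$ the lemma yields a summary $\phi^*$ with
\[
|E_{Miss}(\phi^*)| \;\le\; \frac{\ln 2 \cdot (s+1)}{p} \;=\; \frac{\ln 2 \cdot (s+1) \cdot g}{d_{comp}}.
\]
Crucially, because $E_{Miss}(\phi^*) \subseteq E(H_{Base})$ and every edge of $H_{Base}$ lives inside some part of $\CP_0$, the missing edges decompose across groups as $\sum_{P_0 \in \CP_0} |E(H_{Miss}(\phi^*)[P_0])| = |E_{Miss}(\phi^*)|$.

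Finally, for each $P_0 \in \CP_0$ apply $\graphSplit(H_{Miss}(\phi^*)[P_0], d_{filter})$, which by \Cref{partition-degree-bound} produces $(P_0', Q_0')$ with $\Delta(H_{Miss}(\phi^*)[P_0']) \le d_{filter}$ and $|Q_0'| \le 2 |E(H_{Miss}(\phi^*)[P_0])|/d_{filter}$. Set $\CP := \{P_0' : P_0 \in \CP_0\}$ and $Q := \bigcup_{P_0 \in \CP_0} Q_0'$. Then \Cref{P-size} follows from $|\CP| \le |\CP_0|$, \Cref{P-condition} from the Split guarantee, and \Cref{edge-in} from $P_0' \subseteq P_0$ combined with $H_{Base}[P_0] = G_{Base}[P_0]$. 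Summing the $|Q_0'|$ bounds and using the decomposition above gives
\[
|Q| \;\le\; \frac{2\, |E_{Miss}(\phi^*)|}{d_{filter}} \;\le\; \frac{2\ln 2 \cdot (s+1) \cdot g}{d_{comp} \cdot d_{filter}},
\]
which is \Cref{Q-condition}. The one design choice to get right — and the main obstacle — is defining $H_{Base}$ as the purely within-group subgraph of $G_{Base}$: that single move caps $\Delta(H_{Base}) \le g$ (so the aggressive parameters $p = d_{comp}/g$, $d = g$ satisfy the Compression Lemma), forces every missing edge to already be local to a group (so the per-group Split bounds add without loss), and makes \Cref{edge-in} automatic for any refinement of $\CP_0$.
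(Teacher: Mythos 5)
Your proof is correct and follows essentially the same route as the paper: the same within-group subgraph $H_{Base}$, the same parameters $p=d_{comp}/g$, $d=g$ (with the same trivial handling of the case $d_{comp}\ge g$), and the Compression Lemma followed by the degree-based split. The only cosmetic difference is that the paper applies $\graphSplit$ once to the entire missing graph and intersects the low-degree side with the groups, whereas you apply it group-by-group and union the removed sets; both rest on \Cref{partition-degree-bound} and give the identical bound on $|Q|$.
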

\begin{proof}
Let $H:=\Partition(G_{Base}, g)$ and $H_{Base}:=\bigcup_{S\in H}G_{Base}[S]$. We prove this lemma through casework on $d_{comp}$.\\

If $d_{comp}<g$, then we define $\mathbb{H}:=\mathbb{G}\left(H_{Base}, \frac {d_{comp}}g, g\right)$. 

For all $H\in \supp{\mathbb{H}}$, $\Delta{(H)}\le 2\cdot \frac {d_{comp}}g\cdot g=2\cdot d_{comp}$ by construction, proving \Cref{H-degree}. 

These values of $p=\frac {d_{comp}}g$ and $d=g$ satisfy the requirements for the $\hyperref[compression]{\text{Compression Lemma}}$. In particular, since $4\ln(2n)\le d_{comp}<g$, we have $p=\frac {d_{comp}}g\in (0, 1)$. Additionally, $d=g\ge \Delta{(H_{Base})}$ since each subgraph has at most $g$ vertices, and $d=g= \frac {d_{comp}}{d_{comp}/g}\ge \frac {4\ln (2n)}p$. Hence, the values of $p$ and $d$ satisfy the necessary constraints. 

Therefore, there exists some message $\phi$ such that $H_{Miss}(\phi)=(V, E_{Miss}(\phi))$ satisfies 
\begin{align*}
    |E_{Miss}(\phi)|\le \frac{\ln(2)\cdot (s+1)\cdot g}{d_{comp}}.
\end{align*}

Let $(L, Q):=\graphSplit(H_{Miss}(\phi), d_{filter})$. By \Cref{partition-degree-bound}, $|Q|\le \frac{2\ln(2)\cdot (s+1)g}{d_{comp}\cdot d_{filter}}$, proving \Cref{Q-condition}. Additionally, let  $\CP:=\{S\cap L:S\in H\}$. Since $H$ is a partition of $V$ and $Q=V\backslash L$, we know that $\CP\cup \{Q\}$ is a partition of $V$. By \Cref{partition-count}, $|\CP|=|H|\le \lceil\frac ng\rceil$, proving \Cref{P-size}.

We will now prove the conditions on all $P\in \CP$. Let $X_P:=H_{Miss}(\phi)[P]$. 
\begin{itemize}
    \item By \Cref{partition-degree-bound}, since $P\subseteq L$, $\Delta{(X_P)}\le \Delta{(H_{Miss}(\phi)[L])}\le d_{filter}$. This proves \Cref{P-condition}.
    \item We know $P\subseteq S$ for some $S\in H$. Additionally, $H$ is a partition of $V$. So, by our construction of $H_{Base}$, we have $H_{Base}[S]=G_{Base}[S]$. Hence, since $P\subseteq S$, we have $H_{Base}[P]=G_{Base}[P]$. This proves \Cref{edge-in}.
\end{itemize}

If $d_{comp}\ge g$, then we let $\mathbb{H}=\Phi(H_{Base}, 1, |V|)$. We let $\CP=H$, so $Q=\emptyset$, which satisfies \Cref{Q-condition}. The distribution $\mathbb{H}$ is a single graph $H_{Base}$. Since each subgraph in $H_{Base}$ has $\le g$ vertices, $\Delta{(H_{Base})}\le g\le 2\cdot d_{comp}$, proving \Cref{H-degree}. Since $|\CP|=|H|$, \Cref{P-size} is proven by \Cref{partition-count}. Since $\supp{\mathbb{H}}=\{H_{Base}\}$, $H_{Miss}(\phi)=\emptyset$, proving \Cref{P-condition}. By how we defined $\CP$ and $H_{Base}$ (note that $H$ is a partition), $G_{Base}[P]=H_{Base}[P]$ for all $P\in \CP$. This proves \Cref{edge-in}. 
\end{proof}

In the second step of the subalgorithm, we will apply \Cref{clique-removal} on $H_{Miss}(\phi)[P]$ for all $P\in \CP$, storing the removed edges in a subgraph $R$ (which will have low degree). In the end, the maximum clique size in $H_{Miss}(\phi)-R$ will be small, since it cannot exceed the \textit{sum} of the maximum clique sizes over the subgraphs induced by the vertex sets in $\CP$.

\begin{lemma}\label{algo-second-step}
Suppose we have a graph $H$ with $n$ vertices and a partition of its vertices into a collection of vertex sets $\CP$ and a vertex set $Q$. Additionally, for all $P\in \CP$, suppose that the degree of $G[P]$ does not exceed $d_{filter}$.

Then, for any integer $d_{remove}>0$ (removal degree) there is a subgraph $R\subseteq H$ such that: 
\begin{itemize}
\item For all $P\in \CP$, the largest clique in $(H_{Miss}(\phi)-R)[P]$ has size $\le 16\ln(n)\cdot \frac{d_{filter}}{d_{remove}}+10$. 
\item None of the edges in $R$ is incident to a vertex in $Q$. 
\item The degree of $R$ does not exceed $d_{remove}$. 
\end{itemize}
\end{lemma}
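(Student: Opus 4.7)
The plan is to apply \Cref{clique-removal} separately to each induced subgraph $H[P]$ for $P \in \CP$, and then take $R$ to be the disjoint union of the resulting ``removal'' subgraphs. Since the parts in $\CP$ partition $V \setminus Q$, the edges produced this way will automatically avoid $Q$, and their degree contributions at any vertex will come from a single part.

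More concretely, fix $P \in \CP$. By hypothesis, $\Delta(H[P]) \le d_{filter}$, so we may invoke \Cref{clique-removal} on the graph $H[P]$ (which has at most $n$ vertices) with degree parameter $d_{remove}$. This produces a subgraph $R_P \subseteq H[P]$ with $\Delta(R_P) \le d_{remove}$ such that every clique in $H[P] - R_P$ has size at most $16 \ln(|P|) \cdot d_{filter}/d_{remove} + 10 \le 16 \ln(n) \cdot d_{filter}/d_{remove} + 10$, using monotonicity of $\ln$. I would then set
\[
R := \bigcup_{P \in \CP} R_P.
\]

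Verifying the three conclusions is immediate from the fact that $\CP$ is a collection of pairwise disjoint vertex sets, all disjoint from $Q$. First, every edge of $R$ lies inside some $H[P]$ with $P \in \CP$, hence cannot touch $Q$. Second, any vertex $v$ lies in at most one $P \in \CP$, and its $R$-degree equals its $R_P$-degree, which is at most $d_{remove}$; vertices in $Q$ have $R$-degree $0$. Third, for each $P \in \CP$, since the parts are disjoint, $(H - R)[P] = H[P] - R_P$, and by the guarantee from \Cref{clique-removal} its maximum clique has size at most $16 \ln(n) \cdot d_{filter}/d_{remove} + 10$ (here I am reading the statement's $(H_{Miss}(\phi) - R)[P]$ as $(H - R)[P]$, since the lemma is stated for an abstract graph $H$).

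There is no real obstacle: the whole point of the preceding work is to reduce to the low-degree case, and \Cref{clique-removal} already does the nontrivial clique-destruction argument via random edge sampling. The only mild subtlety is that one must exploit the disjointness of the parts to avoid double-counting when combining the $R_P$'s into a single $R$ of globally bounded degree, and to ensure that the clique bound holds part-by-part rather than only globally.
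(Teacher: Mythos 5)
Your proposal is correct and follows essentially the same route as the paper: apply \Cref{clique-removal} to each $H[P]$, take $R=\bigcup_{P\in\CP}R_P$, and use the disjointness of the parts (and their disjointness from $Q$) to get the degree bound, the avoidance of $Q$, and the identity $(H-R)[P]=H[P]-R_P$. The small touches you add (bounding $\ln|P|\le\ln n$ and reading $H_{Miss}(\phi)$ as the abstract graph $H$) are consistent with how the paper uses the lemma.
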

\begin{proof}
For all $P\in \CP$, by \Cref{clique-removal} on $H[P]$ and $d_{remove}$, there is a graph $R_P$ with degree $\le d_{remove}$ s.t.\ the maximum clique size in $H[P]-R_P$ is $\le 16\ln(n)\cdot\frac{d_{filter}}{d_{remove}}+10$. 

We take $R:=\bigcup_{P\in \CP}R_P$. Since $Q$ is disjoint with all sets in $\CP$, none of the edges in $R$ are incident to any vertex in $Q$. Additionally, for all $P\in \CP$, since $R_P\subseteq R$ and the vertex sets in $\CP$ are disjoint, we have $R_P=R[P]$. Thus, the largest clique in $(H-R)[P]=H[P]-R_P$ has size $\le 16\ln(n)\cdot\frac{d_{filter}}{d_{remove}}+10$.
\end{proof}

\section{A Communication Lower Bound for Independent Set}

We will prove our lower bound in \Cref{main-result} by showing that it is impossible for any set of players to output a large independent set, which is sufficient by \Cref{algo-to-players}. To do so, we will design an adversary that, for any large independent set, can find an \textit{invalid} graph and set of edges to send to each player such that each player outputs the same message. This will ensure that no deterministic algorithm can confidently output any large independent set. 

\subsection{The Adversary}

\begin{theorem}\label{main-theorem}
    There exist constants $\eta>0$ and $\eta_0>0$ such that: if $n$, $s$, and $\Delta$ are integers satisfying
    \[ \quad \eta<n\le s, \quad \max\left\{\eta_0\cdot \frac{s\ln(n)}n, \eta_0\cdot {(\ln n)}^2 \right \}< \Delta < \sqrt n,\]
    then the size of the largest independent set a deterministic algorithm using $s$ bits of memory can output for all graphs of size $n$ and maximum degree $\Delta$ does not exceed $\Ot\left(\frac s{\Delta^2}\cdot \frac sn\right)$. 
\end{theorem}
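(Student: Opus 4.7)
The plan is to prove \Cref{main-theorem} via a communication-complexity argument in the $\IndepSet(n,\Delta,k,s)$ game; by \Cref{algo-to-players} this yields the claimed streaming lower bound. For an appropriately chosen number of players $k$, I will exhibit an adversary that forces the last player's missing graph to have maximum clique size $\Ot(s^2/(n\Delta^2))$. Since the output independent set must be a clique in the missing graph of the final transcript, the theorem then follows.

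The adversary constructs $G$ iteratively, applying in each round $i$ the machinery of \Cref{compressed-partition} and \Cref{algo-second-step}. With parameters $g$ (group size), $d_{comp}$, $d_{filter}$, $d_{remove}$ to be balanced at the end, the adversary first uses \Cref{compressed-partition} on the currently available base graph to obtain a subgraph $H_{Base}^{(i)}$ and distribution $\mathbb{H}^{(i)}$; viewing Player $i$'s transcript function as a size-$s$ compression, this yields a message $\phi_i$ and a partition $\CP_i \cup \{Q_i\}$ of $V$ such that $|Q_i| = \Ot(sg/(d_{comp}d_{filter}))$ and the missing graph $H_{Miss}(\phi_i)$ has degree at most $d_{filter}$ on each $P \in \CP_i$. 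Next it applies \Cref{algo-second-step} with parameter $d_{remove}$ to extract a low-degree subgraph $R_i$, avoiding $Q_i$, whose removal caps the clique number of the missing graph restricted to each $P$ at $O(\log(n)\,d_{filter}/d_{remove}) + O(1)$. Finally the adversary fixes Player $i$'s input to a sample from $\mathbb{H}^{(i)}$ consistent with $\phi_i$ and commits the edges of $R_i$ into $G$; crucially, the base distribution $\mathbb{H}^{(i)}$ is restricted to live inside the groups of $\CP_i$ so that inter-group edges of the base are always committed to $G$. This is the Tur\'an-type construction from \Cref{sec:our-technique}, which forces the missing graph after all $k$ rounds to decompose as a disjoint union over groups.

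The number of players $k$ and the parameters $g, d_{comp}, d_{filter}, d_{remove}$ are then balanced so that $k(d_{comp}+d_{remove}) = O(\Delta)$ (degree budget), the compression-lemma prerequisites $d_{comp} \ge 4\ln(2n)/p$ hold in every round (guaranteed by the hypotheses $\Delta > \Omega(s\ln(n)/n)$ and $\Delta > \Omega((\ln n)^2)$), and the two contributions to any output clique---the per-group cap from a single $P \in \CP_k$ plus the cumulative leftover $\sum_{i=1}^{k}|Q_i|$---together are $\Ot(s^2/(n\Delta^2))$. Because the Tur\'an-type invariant confines any output clique to a single group plus the leftover vertices, this gives the claimed bound on the output independent set.

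The main obstacle is the cross-round accounting. Each application of the compression lemma produces a different partition of $V$ and refines a different distribution, so the cumulative missing graph's structure is governed by the interaction of all $k$ partitions. The crux of the argument is maintaining, across all $k$ rounds, the invariant that the missing graph remains a disjoint union over the groups of the current partition---this is what allows per-group clique caps to not multiply by the number of groups---while simultaneously keeping $\Delta(G) \le \Delta$ and the leftover $\sum_i|Q_i|$ within budget. Threading these constraints and verifying the parameter choices satisfy the hypotheses of \Cref{compression}, \Cref{compressed-partition}, and \Cref{algo-second-step} in every round is the delicate technical work.
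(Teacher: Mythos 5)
Your overall architecture is the same as the paper's: reduce via \Cref{algo-to-players} to the $\IndepSet(n,\Delta,k,s)$ game, run an adaptive adversary that in each round applies \Cref{compressed-partition} followed by \Cref{algo-second-step}, fix the player's input to a sample consistent with the chosen message, and recurse. However, your final accounting step has a genuine gap. You invoke a ``Tur\'an-type invariant'' that confines any output clique of the missing graph to a single group $P\in\CP_k$ plus a leftover of size $\sum_{i=1}^k|Q_i|$, and neither half of this is available. Confinement is not something the construction (or the compression machinery) gives you: for two output vertices lying in different groups, of the same round or of different rounds, the connecting edge is never placed in any $H_{Base}(i)$ or any $R_i$, so the adversary has no consistent alternative input containing it and cannot force such a pair to be ``uncertain''; a correct algorithm may therefore legitimately spread its output across many groups. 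The paper never proves confinement; instead, \Cref{find-breaking-element} uses a pigeonhole argument that charges up to the per-group clique cap ($32\ell^2\ln n+10$) to \emph{every} group of \emph{every} round, and the factor $k\cdot|\CP_i|=O(\ln n)\cdot O(n/\Delta^2)$ from \Cref{bound-on-P_i} is essential: the per-group cap alone is only $\Ot(s^2/n^2)$, and it is precisely this multiplication by the number of groups per round that yields the claimed $\Ot\big(\tfrac{s}{\Delta^2}\cdot\tfrac{s}{n}\big)$. Separately, the $Q_i$'s are not discarded leftovers to be summed: $Q_i$ \emph{is} the next round's vertex set ($n_{i+1}=|Q_i|$), and with the intended parameters $|Q_i|\approx n_i/e$, so $\sum_i|Q_i|=\Theta(n)$, far outside your stated budget. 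The point of the recursion is that only the final residue counts, and \Cref{algorithm-finishes} shows $n_k\le n/\Delta^2$ after $k=\lceil\ln n\rceil+1$ rounds because the choice of $\ell$ makes $n_i$ shrink geometrically.

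Two further parameter points you leave unspecified are actually load-bearing. The group size must scale with the shrinking base graph, namely $g_i=\lfloor n_i\Delta^2/n\rfloor$, so that the number of groups in every round stays $O(n/\Delta^2)$ (\Cref{bound-on-P_i}) while the $Q$-bound of \Cref{compressed-partition} still forces geometric decay of $n_i$; a fixed $g$ cannot do both. And the degree budget is not ``$k(d_{comp}+d_{remove})=O(\Delta)$'': because $R_{i+1}$ avoids $Q_i$, each vertex receives removal edges in at most one round, so the adversary can afford $d_{remove}=\Delta/2$ rather than $O(\Delta/k)$ while only the compression degrees pay the factor $k$ (\Cref{valid-graph}). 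The latter costs only logarithmic factors, but the missing pigeonhole accounting and the misreading of the $Q_i$'s are real gaps: as written, your argument does not yield a bound on the output size without the unproven confinement claim.
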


Under the semi-streaming model, $s=\Ot(n)$. Consequently, the largest independent set that a deterministic algorithm can find under this model has size $\Ot\left(\frac n{\Delta^2}\right)$, which formalizes \Cref{main-result}.

\newcommand{\distribFac}{\ell}

To begin with the proof of \Cref{main-theorem}, we will let $\eta=e^2$ and $\eta_0=128$. We let ${\distribFac}=\max\left\{\left\lceil\frac{2e\ln(2)(s+1)}n \right\rceil, \left\lceil8\ln n\right\rceil\right\}$. By our choice of $\eta$ and $\eta_0$, it is easy to prove that ${\distribFac}<\frac{\Delta}{4\ln(2n)}$. We let $k=\lceil \ln n\rceil +1$. For our adversary, $k$ denotes the number of players, and we will generally send graphs of degree $\le \frac{\Delta}{\distribFac}$ to each player. 

At a high level, our adversary adheres to the following structure:
\begin{itemize}
    \item For all $i=1\dots k$, $G_{Base}(i)$ is the graph where most edges for Player $i$ will be chosen from.
    \item Similarly, $R_i$ will be provided as an additional set of edges to send to Player $i$ to ``manually'' remove large cliques in the missing graph for the previous players' messages. 
    \item For all $i\in [k]$, Player $i$ receives $R_i$ and a subgraph $H_i\subseteq G_{Base}(i)$. 
    \item \Cref{compressed-partition} is used to find an adversarial distribution of subgraphs $\mathbb{H}_i$, which is used to obtain $H_i$. \Cref{algo-second-step} is used to determine $R_{i+1}$, which is a subgraph of $G_{Base}(i)$.
    \item \textit{All} graph parameters are derived adversarily based on what each player communicates. The compression algorithm associated with each player also affects the input for future players by determining $G_{Base}(i+1)$ and $R_{i+1}$.
    \item The adversarily generated input graph, $G_{input}$, is the union of all graphs $H_i$ and $R_i$. 
\end{itemize}

\begin{mdframed}
An \textbf{adversary} that generates a ``hard'' input. 

\begin{enumerate}
    \item\label{1} We start with $G_{Base}(1)$ as the clique on $n$ vertices and $R_1$ as an empty graph.
    \item For $i=1\dots k$,
        \begin{enumerate}
            \item Let $n_i$ be the number of vertices in $G_{Base}(i)$, where $G_{Base}(i)=(V_{Base}(i), E_{Base}(i))$. 
            \item If $n_i<\frac n{\Delta^2}$, then terminate the algorithm, letting $\mathbb{H}=\emptyset$ and $R_{i+1}=\emptyset$. For the sake of our analysis, we let $G_{Base}(i+1)=G_{Base}(i)$ and run the adversary until $i=k$. 
            \item Otherwise, apply \Cref{compressed-partition} with $G_{Base}(i)$, group size $\lfloor\frac {n_i\Delta^2}n\rfloor$, message size $s$, compression degree $\frac {\Delta}{\distribFac}$, and filter degree ${\distribFac}^2\Delta$. 
            
            We have a distribution $\mathbb{H}_i$ with a base graph $H_{Base}(i)$, and we define $\Phi_i=\Phi(\mathbb{H}_i, M_{<i})$ as the compression algorithm for Player $i$ \textit{after} receiving $R_i$. 
            
            By \Cref{compressed-partition}, there is a message $M_i:=\phi$ and some partition of $V_{Base}(i)$ into a collection of vertex sets $\CP_i$ and a vertex set $Q_i$ satisfying the conclusions of \Cref{compressed-partition}.

            \item Apply \Cref{algo-second-step} with $H_{Miss}(M_i)$ and removal degree $\frac{\Delta}2$ to find some $R_{i+1}$ such that the conclusions of \Cref{algo-second-step} holds. 
            
            \item Let $G_{Base}(i+1)=(G_{Base}(i)-(H_{Base}(i)-H_{Miss}(M_i)))[Q_i]$. 
        \end{enumerate}
        \item Finally, we generate the edges we send each player. For $i=1,\dots,k$,
        \begin{enumerate}
            \item If $\mathbb{H}=\emptyset$, let $H_i=\emptyset$. Otherwise, choose $H_i\in \supp{\mathbb{H}_i}$ such that $\Phi_i(H_i)=M_i$.
            \item We send Player $i$ the graph $H_i\cup R_i$. 
        \end{enumerate}
        Then, the input graph is the union of these graphs. In particular, $G_{input}:=\bigcup_{i=1}^k H_i\cup R_i$. 
\end{enumerate}
\end{mdframed}

To begin, we must ensure that the input graph $G_{input}$ is valid. The following lemma ensures that we are not sending a multigraph to the players. 

\begin{lemma}\label{graph-is-not-multi}
    No two players will receive the same edge (i.e. $G_{input}$ is not a multigraph).
\end{lemma}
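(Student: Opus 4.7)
The plan is to track which vertex set each player's edges live in and to exploit the nested structure the adversary bakes into its construction. Write $V_i := V_{Base}(i)$. Since $V_{i+1} = Q_i \subseteq V_i$, the sequence $V_1 \supseteq V_2 \supseteq \cdots$ is nested; and since $G_{Base}(i+1) = (G_{Base}(i) - (H_{Base}(i) - H_{Miss}(M_i)))[Q_i]$, the edge set of $G_{Base}(i)$ is non-increasing in $i$ as well. Two further observations drive the proof. First, because $H_i \in \supp{\mathbb{H}_i}$ and $H_{Miss}(M_i)$ is by definition the set of edges missed by every graph in this support, $H_i \subseteq H_{Base}(i) - H_{Miss}(M_i)$, and in particular $H_i$'s edges lie in $V_i$. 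Second, \Cref{algo-second-step} guarantees $R_{i+1}$ has no edge incident to $Q_i = V_{i+1}$, so the edges of $R_{i+1}$ lie in $V_i \setminus V_{i+1}$ (with the convention $R_1 = \emptyset$). The early-termination branch sets $\mathbb{H}_i = R_{i+1} = \emptyset$ and hence contributes no edges, so it is trivially compatible with the rest of the argument.

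For any fixed $i$, $H_i$ (in $V_i$) and $R_i$ (in $V_{i-1} \setminus V_i$) are vertex-disjoint and hence edge-disjoint. For $i < j$, I would first dispose of everything touching $R_i$ in one shot: edges of $R_i$ lie in $V_{i-1} \setminus V_i$, while for any $j \geq i+1$ both $H_j$ and $R_j$ have all endpoints inside $V_{j-1} \subseteq V_i$; since $V_i \cap (V_{i-1} \setminus V_i) = \emptyset$, no edge of $R_i$ can coincide with an edge of $H_j$ or $R_j$.

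The remaining comparisons pit $H_i$ against $H_j \cup R_j$ for $j > i$. The edges of $H_i$ are precisely among those removed when passing from $G_{Base}(i)$ to $G_{Base}(i+1)$, so by the edge-monotonicity of $G_{Base}$, they are absent from every $G_{Base}(j')$ with $j' \geq i+1$. Hence $H_i \cap H_j = \emptyset$ follows from $H_j \subseteq G_{Base}(j)$, and for $j \geq i+2$, $H_i \cap R_j = \emptyset$ follows from $R_j \subseteq H_{Base}(j-1) \subseteq G_{Base}(j-1)$. The only subcase in which vertex-nestedness fails and $G_{Base}$-monotonicity is not yet enough is $j = i+1$: both $H_i$ and $R_{i+1}$ can have endpoints inside $V_i$. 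Here I would invoke the defining property of the missing graph directly: $R_{i+1} \subseteq H_{Miss}(M_i)$, while $H_i \in \supp{\mathbb{H}_i}$ is disjoint from $H_{Miss}(M_i)$. I expect this $j = i+1$ subcase to be the only part that is not pure bookkeeping; every other case closes by either nestedness of the vertex sets or by the shrinking of $G_{Base}$.
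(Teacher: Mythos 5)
Your proof is correct and takes essentially the same route as the paper's: the work is done by the same three facts (edges of $H_i$ lie in $H_{Base}(i)-H_{Miss}(M_i)$ and hence are removed in forming $G_{Base}(i+1)$, $R_{i+1}$ has no edge incident to $Q_i$, and $R_{i+1}\subseteq H_{Miss}(M_i)$ while $H_i$ contains no edge of $H_{Miss}(M_i)$) together with the monotonicity $G_{Base}(j)\subseteq G_{Base}(i+1)$; you merely repackage the bookkeeping, handling the $R_i$'s by vertex-nestedness rather than routing everything through edge-disjointness with $G_{Base}(i+1)$. One small fix: $H_{Miss}(M_i)$ consists of the edges missed by every graph \emph{mapped to} $M_i$, not by every graph in $\supp{\mathbb{H}_i}$, so the disjointness of $H_i$ from $H_{Miss}(M_i)$ should be justified by the adversary's choice $\Phi_i(H_i)=M_i$ (as the paper does), not by membership in the support alone.
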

\begin{proof}
It is sufficient to prove the stronger statement that $\{H_i:i\in[k]\}\cup \{R_{i+1}:i\in[k]\}$ is disjoint. For each $i\in [k]$, by \Cref{compressed-partition} and \Cref{algo-second-step}, $H_{Base}(i)\subseteq G_{Base}(i)$ and $R_{i+1}\subseteq H_{Miss}(M_i)\subseteq H_{Base}(i)\subseteq G_{Base}(i)$. Hence, both $H_i$ and $R_{i+1}$ are subgraphs of $G_{Base}(i)$. 

Firstly, we claim that $H_i$ and $R_{i+1}$ are edge-disjoint. In particular, since $H_i$ is sampled from $\supp{\mathbb{H}_i}$ and $\Phi_i(H_i)=M_i$, by definition, we know that none of the edges in $H_i$ are in $H_{Miss}(M_i)$. However, $R_{i+1}\subseteq H_{Miss}(M_i)$, so none of the edges in $R_{i+1}$ are in $H_i$.

Next, we prove that $G_{Base}(i+1)$ is edge-disjoint with $H_i$ and $R_{i+1}$. For any edge in $H_i$, it is in $H_{Base}(i)$ but not $H_{Miss}(i)$. Hence, it is not in $G_{Base}(i)-(H_{Base}(i)-H_{Miss}(M_i))$. For $R_{i+1}$, by \Cref{algo-second-step}, none of its edges are incident to a vertex in $Q_i$. Thus, none of the edges in $R_{i+1}$ appear in $G_{Base}(i+1)=(G_{Base}(i)-(H_{Base}(i)-H_{Miss}(M_i)))[Q_i]$.

Finally, for any $j>i$, we know that $G_{Base}(j)\subseteq \dots\subseteq G_{Base}(i+1)$. Thus, for any edge $(u, v)$ in $H_i$ or $R_{i+1}$, we know that $(u, v)$ is not in $G_{Base}(j)$ (since $(u, v)\not\in G_{Base}(i+1)$). Since $H_j$ and $R_{j+1}$ are subgraphs of $G_{Base}(j)$, $(u, v)$ will not be in $H_j$ or $R_{j+1}$ either. 

Thus, an edge in $H_i$ or $R_{i+1}$ (it only appears in one of the two) will not be sent in any $H_j$ or $R_{j+1}$ if $j>i$. As such, $\{H_i:i\in[k]\}\cup \{R_{i+1}:i\in[k]\}$ is edge-disjoint. Since $R_1$ is empty, this set contains all the graphs sent to any player. Hence, no two players will receive the same edge.
\end{proof}

Next, we must also ensure that the maximum degree of the input graph agrees with the maximum degree given to each player. 
\begin{lemma}\label{valid-graph}
For each vertex $v\in G_{input}$, $\deg{(v)}\le \Delta$.
\end{lemma}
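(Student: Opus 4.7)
\textbf{Proof proposal for \Cref{valid-graph}.} The plan is to fix an arbitrary vertex $v$ and bound separately how many edges incident to $v$ come from the subgraphs $H_i$ versus the subgraphs $R_i$. The crucial structural observation is that the vertex sets $V_{Base}(i)$ shrink monotonically: by construction $V_{Base}(i+1) = Q_i \subseteq V_{Base}(i)$. Hence there is a well-defined \emph{departure round} $i^\star \in \{1,\dots,k\}$ for $v$, which is the largest index with $v \in V_{Base}(i^\star)$. (If $v$ never leaves, set $i^\star = k$.)

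For the $H$-contribution: since $H_i \subseteq G_{Base}(i)$, the vertex $v$ has no incident edges in $H_i$ once $v \notin V_{Base}(i)$, i.e.\ for $i > i^\star$. For $i \le i^\star$, property \Cref{H-degree} of \Cref{compressed-partition} (applied with $d_{comp} = \Delta/\distribFac$) guarantees $\Delta(H_i) \le 2d_{comp} = 2\Delta/\distribFac$. Thus $v$ picks up at most $k \cdot 2\Delta/\distribFac$ edges from the $H_i$'s.

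For the $R$-contribution: $R_1$ is empty. For $i \ge 1$, $R_{i+1}$ is produced by \Cref{algo-second-step} applied to $H_{Miss}(M_i)$, which lives on $V_{Base}(i)$; the second bullet of \Cref{algo-second-step} says no edge of $R_{i+1}$ is incident to $Q_i$. Therefore $v$ has an incident edge in $R_{i+1}$ only if $v \in V_{Base}(i) \setminus Q_i$. But $V_{Base}(i) \setminus Q_i = V_{Base}(i) \setminus V_{Base}(i+1)$, and by the nesting $V_{Base}(1) \supseteq V_{Base}(2) \supseteq \dots$ these ``leaving'' sets are disjoint across $i$. So $v$ lies in $V_{Base}(i) \setminus Q_i$ for \emph{at most one} value of $i$ (namely $i = i^\star$, and only if $i^\star < k$), meaning $v$ has incident edges in at most one $R_{i+1}$. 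By the third bullet of \Cref{algo-second-step} (with $d_{remove} = \Delta/2$), that contribution is at most $\Delta/2$.

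Combining the two contributions,
\[
\deg_{G_{input}}(v) \;\le\; k \cdot \frac{2\Delta}{\distribFac} + \frac{\Delta}{2}.
\]
Plugging in $k \le \ln n + 2$ and $\distribFac \ge 8\ln n$ gives $k \cdot 2/\distribFac \le (\ln n + 2)/(4\ln n) \le 1/2$ for $n > \eta = e^2$ (so $\ln n > 2$, hence $\ln n + 2 \le 2\ln n$). This yields $\deg_{G_{input}}(v) \le \Delta/2 + \Delta/2 = \Delta$, as required. The only step that required any real thought is the disjointness of the leaving sets $V_{Base}(i)\setminus Q_i$, which is what restricts each vertex to pick up $R$-edges in at most one round and prevents the potentially catastrophic summing of $\Delta/2$ over all $k$ rounds; everything else is bookkeeping from the stated properties of \Cref{compressed-partition} and \Cref{algo-second-step}.
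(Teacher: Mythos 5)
Your proposal is correct and follows essentially the same route as the paper: bound the $H$-contribution by $k\cdot\frac{2\Delta}{\distribFac}$ via \Cref{H-degree} of \Cref{compressed-partition}, observe that the ``no edge of $R_{i+1}$ touches $Q_i$'' property of \Cref{algo-second-step} together with the nesting of the base graphs forces each vertex to meet edges of at most one $R_i$ (contributing at most $\Delta/2$), and then finish with the same arithmetic using $k=\lceil\ln n\rceil+1$ and $\distribFac\ge 8\ln n$. Your ``disjoint leaving sets'' phrasing is just a slight repackaging of the paper's argument that a vertex incident to $R_i$ lies outside $Q_{i-1}$ and hence outside all subsequent base graphs.
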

\begin{proof}
We consider the degree on each vertex $v$. Firstly, there is at most one value of $i$ for which $v$ is incident to some edge in $R_i$. After all, by \Cref{algo-second-step}, if an edge in $R_i$ is incident to $v$, then $v\not\in Q_{i-1}$ and $v$ is not in $G_{Base}(i)$ nor any subsequent graphs. For this value of $i$, $\Delta(R_i)\le \frac{\Delta}2$. Furthermore, $\Delta(H_j)\le \frac{2\Delta}{\distribFac}$ for all $j\in [k]$ by \Cref{compressed-partition}. Thus, $\deg(v)$ cannot exceed $\frac{\Delta}2$ in $R_i$ nor $\frac{2\Delta}{\distribFac}$ in $H_j$ for any $j\in [k]$. 

Since $G_{input}=\left(\bigcup_{j=1}^k H_j\right)\cup\left(\bigcup_{i=1}^k R_i\right)$, the degree of $v$ in $G_{input}$ is the sum of its degree in each of these graphs. In particular, the degree of $v$ is $\le k\cdot \frac{2\Delta}{\distribFac}+\frac{\Delta}2\le \frac{4\ln(n)\cdot\Delta}{\distribFac}+\frac{\Delta}2\le\frac{\Delta}2+\frac{\Delta}2=\Delta$. The second last inequality is true by $k=\lceil \ln(n)\rceil+1$ and $n>\eta$. The last is true by ${\distribFac}\ge 8\ln n$. 
\end{proof}

\Cref{graph-is-not-multi} and \Cref{valid-graph} prove that the input graph is valid. The next step is to prove that the adversary terminates --- for the sake of our analysis, we will instead show that the final base graph $G_{Base}(k)$ is small. 

\begin{lemma}\label{algorithm-finishes}
$n_k\le \frac n{\Delta^2}$. 
\end{lemma}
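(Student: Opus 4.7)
The plan is to show that $n_i$ contracts by a factor of at least $1/e$ at every round in which the adversary does not terminate, and that the choice $k=\lceil\ln n\rceil+1$ is large enough to drive $n_k$ below $n/\Delta^2$. I would split on whether the early-termination branch of the adversary ever fires. If $n_{i^*}<n/\Delta^2$ for some smallest $i^*\in[k]$, then by the ``for the sake of our analysis'' clause in the adversary's definition, $G_{Base}(j)=G_{Base}(i^*)$ for every $j\ge i^*$, whence $n_k=n_{i^*}<n/\Delta^2$ and we are done.

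In the non-termination case, assume $n_i\ge n/\Delta^2$ for all $i\in[k]$, so that the call to \Cref{compressed-partition} at round $i$ is legitimate. I would first confirm its hypotheses with the chosen parameters: the group size $g_i:=\lfloor n_i\Delta^2/n\rfloor$ is at least $1$ because $n_i\ge n/\Delta^2$; the compression degree $d_{comp}=\Delta/{\distribFac}$ exceeds $4\ln(2n)$ by the standing bound ${\distribFac}<\Delta/(4\ln(2n))$ noted in the setup; and the filter degree $d_{filter}={\distribFac}^2\Delta$ is clearly at least $1$. Then the bound from \Cref{compressed-partition} on $|Q_i|$ gives
\begin{align*}
|Q_i|\;\le\;\frac{2\ln(2)(s+1)\cdot g_i}{d_{comp}\cdot d_{filter}}\;\le\;\frac{2\ln(2)(s+1)\cdot(n_i\Delta^2/n)}{(\Delta/{\distribFac})\cdot {\distribFac}^2\Delta}\;=\;\frac{2\ln(2)(s+1)}{n\,{\distribFac}}\cdot n_i.
\end{align*}
Since ${\distribFac}\ge\lceil 2e\ln(2)(s+1)/n\rceil$, the leading coefficient is at most $1/e$, and as $G_{Base}(i+1)$ is an induced subgraph on $Q_i$, this yields the recurrence $n_{i+1}\le n_i/e$.

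Iterating from $n_1=n$ (the base graph is the clique on $n$ vertices) gives $n_k\le n/e^{k-1}$. Substituting $k=\lceil\ln n\rceil+1$ yields $n_k\le n/e^{\lceil\ln n\rceil}\le 1$, and because the theorem's assumption $\Delta<\sqrt{n}$ forces $n/\Delta^2>1$, we conclude $n_k\le 1<n/\Delta^2$. The main obstacle is essentially bookkeeping: checking that the \Cref{compressed-partition} preconditions hold in every round, and verifying that the tuning of ${\distribFac}$ (chosen as a maximum involving both $s/n$ and $\ln n$) and $k$ is exactly what is needed to make the per-round shrinkage factor $1/e$ and to reach $n_k\le 1$ within $k$ rounds. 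No new combinatorial input is required.
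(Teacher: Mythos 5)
Your proof is correct and takes essentially the same route as the paper: the paper packages the argument as an induction showing $n_i\le\max\bigl(n/\Delta^2,\,n/e^{i-1}\bigr)$, using the same $|Q_i|$ bound from the compression/partition lemma and the same choice of ${\distribFac}$ to get the per-round factor $1/e$, then concludes via $\Delta<\sqrt n$. Your case split on whether the termination branch fires is just a reorganization of that induction, so no substantive difference.
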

\begin{proof}
    We begin by proving that $n_i\le \max(\frac n{\Delta^2}, \frac{n}{e^{i-1}})$, which we will show through induction.

    For the base case, $n_i=n\le\frac n{e^0}$. 

    For the inductive step, if $n_i\le \frac n{\Delta^2}$, then $n_{i+1}\le \frac n{\Delta^2}$ too. Otherwise, by \Cref{compressed-partition}, 
    \begin{align*}
        |Q_i|\le\frac{2\ln(2)\cdot(s+1)\cdot\lfloor \frac{n_i\Delta^2}{n}\rfloor}{\frac{\Delta}{\distribFac}\cdot {\distribFac}^2\Delta}.
    \end{align*}

    Dropping the floor, the above is $\le \frac{2\ln(2)\cdot (s+1)\cdot n_i\Delta^2}{n{\distribFac}\Delta^2}=\frac{2\ln(2)\cdot(s+1)}{n{\distribFac}}\cdot n_i\le \frac 1e\cdot \frac n{e^{i-1}}=\frac n{e^i}$. 

    The second last inequality is true by the bound on ${\distribFac}$, while the last is true by the inductive hypothesis. Note that $n_{i+1}=|Q_i|$. This concludes the inductive step. \\
    
    To prove \Cref{algorithm-finishes}, we note that $\frac n{e^{k-1}}\le \frac n{e^{\ln n}}\le 1$. Since $\Delta < \sqrt n$, we have $\frac n{\Delta^2}>1>\frac n{e^{k-1}}$, so $\max\left(\frac n{\Delta^2}, \frac n{e^{k-1}}\right)=\frac n{\Delta^2}$. We showed that $n_k\le\max\left(\frac n{\Delta^2}, \frac n{e^{k-1}}\right)$, so $n_k\le \frac n{\Delta^2}$.
\end{proof}

Having proven that the input graph is valid and the algorithm terminates, it remains to bound the largest clique at each iteration. \Cref{algo-second-step} allows us to bound the largest clique in $H_{Miss}(\phi)[P]$ for all $P\in \CP_i$. The following lemma bounds the size of $\CP_i$, which will allow us to bound the size of the largest clique in the missing graph over the \textit{entirety} of $\CP_i$. 

\begin{lemma}\label{bound-on-P_i}
For all $i\in [k]$, the size of $\CP_i$ is $\le \frac{3n}{\Delta^2}$. 
\end{lemma}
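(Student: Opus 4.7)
The plan is to invoke property \Cref{P-size} of \Cref{compressed-partition} directly, which yields $|\CP_i|\le \lceil n_i/g_i\rceil$, where $g_i := \lfloor n_i\Delta^2/n\rfloor$ is the group size that the adversary passes into \Cref{compressed-partition} at iteration $i$. The whole task then reduces to an arithmetic bound on this ceiling.

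First I would dispose of the terminated iterations: if $n_i < n/\Delta^2$, the adversary has terminated at or before step $i$, so $\CP_i$ is empty by convention and the bound is vacuous. Otherwise $n_i \ge n/\Delta^2$, which forces $n_i\Delta^2/n \ge 1$ and hence $g_i \ge 1$, so the partition is well-defined.

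Next, I would split into two cases according to the size of $g_i$. If $n_i\Delta^2/n \ge 2$, then since $\lfloor x\rfloor \ge x/2$ for $x\ge 2$, we get $g_i \ge n_i\Delta^2/(2n)$, so $n_i/g_i \le 2n/\Delta^2$ and $\lceil n_i/g_i\rceil \le 2n/\Delta^2 + 1$. If instead $1 \le n_i\Delta^2/n < 2$, then $g_i = 1$ and $\lceil n_i/g_i\rceil = n_i < 2n/\Delta^2$.

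To convert the $+1$ slack in the first case into the promised constant $3$, I would use the hypothesis $\Delta < \sqrt{n}$ from \Cref{main-theorem}, which gives $n/\Delta^2 > 1$, so $2n/\Delta^2 + 1 \le 3n/\Delta^2$. There is no real obstacle here; the lemma is essentially a sanity check that the adversary's choice of group size $g_i$ produces a partition with the right number of parts, and the whole proof is elementary bookkeeping around a floor and a ceiling.
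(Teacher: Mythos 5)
Your proposal is correct and follows essentially the same route as the paper: invoke Property~\ref{P-size} of \Cref{compressed-partition} with group size $\lfloor n_i\Delta^2/n\rfloor$, use $\lfloor x\rfloor\ge x/2$ (valid since $n_i\Delta^2/n\ge 1$ whenever the partition is formed), and absorb the $+1$ from the ceiling via $\Delta<\sqrt n$; your explicit case split at $n_i\Delta^2/n<2$ and the remark about terminated iterations are just minor bookkeeping the paper leaves implicit.
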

\begin{proof}
By \Cref{compressed-partition}, $|\CP_i|\le\left\lceil\frac{n_i}{\lfloor\frac{n_i\Delta^2}n\rfloor}\right\rceil\le\left\lceil\frac{n_i}{\frac{n_i\Delta^2}{2n}}\right\rceil\le\left\lceil \frac{2n}{\Delta^2}\right\rceil\le\frac{2n}{\Delta^2}+1\le\frac{3n}{\Delta^2}$. 

The second inequality is true because $\frac{n_i\Delta^2}n\ge 1$. The last inequality is true by $\Delta<\sqrt n$. 
\end{proof}

Finally, we prove that Player $k$ cannot conclusively find a large independent set by proving that the adversary can find a ``breaking'' graph if the output is ever too large. 

\begin{lemma}\label{find-breaking-element}
Suppose Player $k$ outputs an Independent set $A$ of size greater than 
\begin{align*}
    \frac{n}{\Delta^2}+\frac{n}{\Delta^2}\cdot k\cdot (96{\distribFac}^2\ln(n)+30).
\end{align*}
Then, there is another graph $G_{input}'$ and set of edges to send to each player such that each player outputs the same message but an edge exists between some $u, v$ in the output of Player $k$. 
\end{lemma}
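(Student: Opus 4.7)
The plan is to call a pair $\{u,v\}$ of vertices in $A$ \emph{exploitable at step $i$} if $(u,v)$ is an edge of $H_{Base}(i)$ but not of $H_{Miss}(M_i)$, and then to establish two claims: (a) the existence of any exploitable pair immediately produces the desired $G_{input}'$, and (b) if $|A|$ exceeds the stated bound, then an exploitable pair must exist. For (a), I would pick $H_i' \in \supp{\mathbb{H}_i}$ with $\Phi_i(H_i') = M_i$ and $(u,v) \in H_i'$ --- such $H_i'$ exists precisely because $(u,v)$ is not in the missing graph --- and form $G_{input}'$ by replacing only $H_i$ with $H_i'$. Since each $R_j$ and each distribution $\mathbb{H}_j$ is a deterministic function of the message transcript, they all remain unchanged; so every player writes the same message, while $(u,v) \in H_i' \subseteq G_{input}'$ certifies that $A$ is not independent in $G_{input}'$.

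The bulk of the argument is (b). My strategy is to partition $A$ by the step at which each vertex leaves $V_{Base}$: set $A_0 := A \cap Q_k$ and, for $i \in [k]$, $A_i := A \cap \bigcup \CP_i$. These sets cover $A$ because $V$ decomposes as $Q_k \sqcup \bigsqcup_{i=1}^k \bigcup \CP_i$. \Cref{algorithm-finishes} gives $|A_0| \le n/\Delta^2$ directly. For each $i$ and each class $P \in \CP_i$ I will argue that $A \cap P$ forms a clique in $(H_{Miss}(M_i) - R_{i+1})[P]$ whenever no pair of $A$ is exploitable; granting this, \Cref{algo-second-step} (applied with $d_{filter} = \ell^2 \Delta$ and $d_{remove} = \Delta/2$) caps $|A \cap P|$ at $16 \ln(n) \cdot 2\ell^2 + 10 = 32\ell^2 \ln n + 10$, and summing across the $|\CP_i| \le 3n/\Delta^2$ classes (\Cref{bound-on-P_i}) and then over $i \in [k]$ yields $|A| \le \frac{n}{\Delta^2} + \frac{n}{\Delta^2} \cdot k \cdot (96\ell^2 \ln n + 30)$, contradicting the hypothesis.

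To prove the clique claim, I fix $u, v \in A \cap P$ with $P \in \CP_i$. By \Cref{compressed-partition}, $P$ is contained in a single block $S$ of $\Partition(G_{Base}(i), g_i)$ on which $H_{Base}(i)[S] = G_{Base}(i)[S]$, so $(u,v)$ is an edge of $H_{Base}(i)$ iff it is an edge of $G_{Base}(i)$. I will show that $(u,v) \in G_{Base}(i)$ always holds: the edge is present in $G_{Base}(1) = K_n$, and the update rule $G_{Base}(j+1) = (G_{Base}(j) - (H_{Base}(j) - H_{Miss}(M_j)))[Q_j]$ can delete it at some $j < i$ only by (i) one of $u, v$ leaving $Q_j$, impossible because $u, v \in V_{Base}(i) \subseteq Q_j$ for every $j < i$, or (ii) $(u,v) \in H_{Base}(j) - H_{Miss}(M_j)$, which would make the pair exploitable at step $j$ and is excluded by assumption. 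Hence $(u,v) \in H_{Base}(i)$, and non-exploitability at step $i$ forces $(u,v) \in H_{Miss}(M_i)$. Finally, $(u,v) \notin R_{i+1}$: otherwise $R_{i+1} \subseteq G_{input}$ would violate the given independence of $A$ in $G_{input}$. This gives the required clique structure.

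The main obstacle is the inductive ``tracing'' argument showing $(u,v) \in G_{Base}(i)$: it is where the adversary's precise update rule interacts with the non-exploitability hypothesis and with the nesting $P \subseteq S$ provided by \Cref{compressed-partition}; the two possible mechanisms for an edge to vanish from $G_{Base}$ have to be matched up with the two ways a pair of $A$ could have become ``dangerous''. The remaining ingredients --- reducing the lemma to the search for an exploitable pair, bounding $|Q_k|$ and $|\CP_i|$, and controlling the clique size per class --- are direct applications of the previously established lemmas.
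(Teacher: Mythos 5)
Your proposal is correct and takes essentially the same route as the paper: your ``no exploitable pair'' contrapositive (each $A\cap P$ is a clique in $(H_{Miss}(M_i)-R_{i+1})[P]$, bounded via \Cref{algo-second-step}, summed over $|\CP_i|\le 3n/\Delta^2$ classes and $k$ steps) is just a reorganization of the paper's double pigeonhole, and the tracing through the update rule plus the swap of $H_j$ for $H'$ with $\Phi_j(H')=M_j$ are exactly the paper's argument. The only point to tighten is your step ``$(u,v)\notin R_{i+1}$ since $R_{i+1}\subseteq G_{input}$'': this inclusion only holds for $i<k$, and for $i=k$ one must note (as the paper does) that $R_{k+1}=\emptyset$ because $n_k< n/\Delta^2$ triggers the adversary's termination, since $R_{k+1}$ is never sent to any player.
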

\begin{proof}
For each $v\in A$, either $v\in G_{Base}(k+1)$ or $v\in P$ for some $i\in [k]$ and $P\in \CP_i$. We let $L_i:=\bigcup_{P\in \CP_i}P$. Since $n_k<\frac n{\Delta^2}$, more than $\frac{nk(96{\distribFac}^2\ln(n)+30)}{\Delta^2}$ vertices in $A$ are also in $L_1\cup\dots\cup L_k$. By the Pigeonhole principle, for some $i\in [k]$, we have $|L_i\cap A|>\frac{n(96{\distribFac}^2\ln (n) + 30)}{\Delta^2}$. 

Since $\CP_i$ is a partition of the vertices in $L_i$, and since $|\CP_i|\le \frac{3n}{\Delta^2}$ by \Cref{bound-on-P_i}, there is some $P=(V_P, E_P)\in \CP_i$ such that $|P\cap A|>32{\distribFac}^2\ln (n)+10$ (again, by the Pigeonhole principle).

By \Cref{compressed-partition} and \Cref{algo-second-step}, the largest clique in $(H_{Miss}(M_i)-R_{i+1})[P]$ has size at most
\begin{align*}
16\ln(n)\cdot \frac{d_{filter}}{d_{removal}}+10=16\cdot \frac{{\distribFac}^2\Delta}{\Delta /2}\cdot \ln(n)+10=32{\distribFac}^2\ln(n)+10.
\end{align*}
Thus, there is some $u, v\in P$ such that $(u, v)\not\in H_{Miss}(M_i)-R_{i+1}$. 

If $(u, v)\in R_{i+1}$, then we send the same graph and edges. Note that $R_{k+1}=\emptyset$ since $n_k<\frac n{\Delta}^2$. Thus, Player $i+1$ exists and will receive $R_{i+1}$. As such, there is an edge between $(u, v)$ in $G_{input}$. 

Otherwise, we know there is some $j\in [i]$ s.t. $(u, v)\in H_{Base}(j)$ but $(u, v)\not\in H_{Miss}(M_j)$. We prove this with two cases:

If $(u, v)\in G_{Base}(i)$, then let $j=i$. Since $H_{Base}(i)[P]=G_{Base}(i)[P]$, we have $(u, v)\in H_{Base}(i)$. Additionally, since $(u, v)\not\in R_{i+1}$, we know that $(u, v)\not\in H_{Miss}(M_j)$ as well.

Otherwise, $(u, v)\not\in G_{Base}(i)$. Since $G_{Base}(i)\subseteq\dots\subseteq G_{Base}(1)$ and $(u, v)\in G_{Base}(1)$, we can find some $j<i$ s.t. $(u, v)\in G_{Base}(j)$ but $(u, v)\not\in G_{Base}(j+1)$. In particular,
\begin{align*}
(u,v)\not\in G_{Base}(j+1)=\big(G_{Base}(j)-(H_{Base}(j)-H_{Miss}(M_j))\big)[Q_j].
\end{align*}
Since $u, v\in G_{Base}(i)$ and all vertices in $G_{Base}(i)$ are also in $Q_j$, we know that $u,v\in Q_j$. This allows us to conclude that 
\begin{align*}
    (u, v)\not\in G_{Base}(j)-(H_{Base}(j)-H_{Miss}(M_j)).
\end{align*}
We know that $(u, v)\in G_{Base}(j)$, so we must also have $(u, v)\in H_{Base}(j)-H_{Miss}(M_j)$. Hence, $(u, v)\in H_{Base}(j)$ and $(u, v)\not\in H_{Miss}(M_j)$. \\

Since $(u, v)\not\in H_{Miss}(j)$, and $(u, v)\in H_{Base}(j)$, there is some graph $H'\in \mathbb{H}_i$ such that $(u, v)\in H'$ and $\Phi_i(H')=M_i$. We substitute $H_i$ with $H'$ when choosing $G_{input}$ and the edges we send each player. Each player sends the same message, but $(u, v)\in G_{input}'$, concluding the proof. 
\end{proof}

\Cref{find-breaking-element} leads naturally to a bound on the largest independent set a deterministic algorithm can find, which the following formalizes. 

\begin{lemma}\label{deterministic-lower-bound}
There does not exist a deterministic algorithm using at most $s$ bits of memory that outputs an independent set of size greater than $\frac{n}{\Delta^2}+\frac{n}{\Delta^2}\cdot k\cdot (96{\distribFac}^2\ln(n)+30)$ for all graphs with $n$ vertices and maximum degree $\Delta$.
\end{lemma}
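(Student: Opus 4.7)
The plan is to argue by contradiction using the adversary of the previous subsection together with \Cref{algo-to-players}. First I would suppose, toward a contradiction, that there exists a deterministic semi-streaming algorithm $\alg$ using at most $s$ bits of memory that, on every $n$-vertex graph of maximum degree $\Delta$, outputs an independent set of size strictly greater than $\frac{n}{\Delta^2}+\frac{n}{\Delta^2}\cdot k \cdot (96\distribFac^2 \ln(n)+30)$. By \Cref{algo-to-players}, $\alg$ induces a deterministic protocol for $\IndepSet(n, \Delta, k, s)$ that inherits the same size guarantee on every legal $k$-partitioned input.

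Next I would run the adversary against this protocol to produce an input graph $G_{input}$ together with a transcript $M_1, \dots, M_k$. By \Cref{graph-is-not-multi} and \Cref{valid-graph}, $G_{input}$ is a legal input to the communication game: it is a simple $n$-vertex graph of maximum degree at most $\Delta$. The assumed correctness of the protocol on this input forces $M_k = A$ to be an independent set of $G_{input}$ whose size strictly exceeds the threshold appearing in the hypothesis of \Cref{find-breaking-element}.

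At this point I would invoke \Cref{find-breaking-element}, which supplies a second input $G_{input}'$ and an accompanying edge partition on which every player produces the very same transcript $M_1, \dots, M_k$, yet $G_{input}'$ contains an edge $(u, v)$ with $u, v \in A$. Since the protocol is deterministic and each player's message is a function only of its own edge set and the prefix of previous messages, Player $k$'s output on $G_{input}'$ is still $A$, which is no longer an independent set; this contradicts correctness and completes the argument.

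The substantive work is already packed into \Cref{find-breaking-element}; the only real obstacle here is the bookkeeping check that $G_{input}'$ is itself a legal input to the communication game, so that correctness may legitimately be invoked on it. This follows because $G_{input}'$ is obtained from $G_{input}$ by replacing some $H_i$ with another graph $H' \in \supp{\mathbb{H}_i}$, and by construction every member of $\supp{\mathbb{H}_i}$ satisfies the same per-graph degree bound $\Delta(H') \le 2\cdot d_{comp} = \frac{2\Delta}{\distribFac}$; therefore the degree computation carried out in \Cref{valid-graph} applies verbatim to $G_{input}'$ and certifies that its maximum degree is at most $\Delta$, so the contradiction is valid.
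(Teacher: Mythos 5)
Your proposal is correct and follows essentially the same route as the paper: reduce to the $\IndepSet(n,\Delta,k,s)$ game via \Cref{algo-to-players}, run the adversary, and apply \Cref{find-breaking-element} to obtain a second input with an identical transcript that invalidates the output $A$. Your extra bookkeeping that $G_{input}'$ remains a legal input (same degree argument as \Cref{valid-graph} after substituting $H'\in\supp{\mathbb{H}_i}$) is a welcome, correct elaboration of a point the paper leaves implicit.
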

\begin{proof}
    Suppose a deterministic algorithm exists. Then, by \Cref{algo-to-players}, there is a set of $k=\lceil \ln n\rceil +1$ players that can solve the $\IndepSet(n, \Delta, k, s)$ problem with the same independent set size. However, by \Cref{find-breaking-element}, this is impossible. After all, if Player $k$ outputs an independent set of size greater than $\frac{n}{\Delta^2}+\frac{n}{\Delta^2}\cdot k\cdot (96{\distribFac}^2\ln(n)+30)$, then we can find another set of edges to send such that each player sends the same message but the independent set is no longer valid. 
\end{proof}

To conclude, we substitute ${\distribFac}=\max\left\{\left\lceil\frac{2e\ln(2)(s+1)}n \right\rceil, \left\lceil8\ln n\right\rceil\right\}$ and $k=\lceil \ln n\rceil +1$ into the bounds we showed in \Cref{deterministic-lower-bound}.

\noindent
\textit{Proof of \Cref{main-theorem}}. This proof is simply a matter of loosening the bound in \Cref{deterministic-lower-bound} until we get a ``simpler'' form. We will start with the choices of $k$ and ${\distribFac}$. For $k$, we have
    \[
        k=\lceil\ln n\rceil+1 \le 2\ln n+1\le 3\ln n.
    \]

For the first possible value of $\distribFac$, we have
\[ 
    \left\lceil\frac{2e\ln(2)(s+1)}n \right\rceil\le \left\lceil\frac{4e\ln(2)\cdot s}n \right\rceil\le 8e\ln(2)\cdot \frac sn.
\]
For the other potential value of $\distribFac$, we have
\begin{align*}
    \lceil 8\ln n\rceil&\le 16\ln n.
\end{align*}

Hence, ${\distribFac}\le \max\{8e\ln 2\cdot \frac sn, 16\ln n\}$. Since both terms are positive and greater than $1$, we can multiply both upper bounds to get
\begin{align*}
    {\distribFac}\le 8e\ln 2\cdot \frac sn\cdot 16\ln n=128e\ln 2\cdot \frac {s\ln n}n.
\end{align*}
To simplify calculations, we will let $r=128e\ln(2)$, which is a constant.

By \Cref{deterministic-lower-bound}, no deterministic algorithm can find an independent set of size greater than
\begin{align*}
\frac{n}{\Delta^2}+\frac{n}{\Delta^2}\cdot k\cdot (96{\distribFac}^2\ln(n)+30)&\le \frac n{\Delta^2}\left(1+3\ln(n)\cdot \left[96\cdot {\left(\frac{rs\ln n}n\right)}^2\cdot \ln(n)+30\right]\right)\\
&\le \frac n{\Delta^2}\cdot 2\cdot 6{(\ln(n))}^2\left(96\cdot {\left(\frac{rs\ln n}n\right)}^2\right)\\
&=1152r^2\ln^4(n)\cdot\frac{s^2}{n\Delta^2}\\
&= \Ot\left(\frac s{\Delta^2}\cdot \frac sn\right).
\end{align*}

Hence, the size of the largest independent set that a deterministic algorithm can find for \textit{all} graphs with $n$ vertices and maximum degree $\Delta$ does not exceed $\Ot\left(\frac s{\Delta^2}\cdot \frac sn\right)$. $\qed$

\section*{Acknowledgements}
I would like to thank Sepehr Assadi for introducing me to this problem and for his numerous discussions, comments, and feedback throughout the development of this work. Without him, this would not have been possible. I would also like to thank Vihan Shah, Janani Sundaresan, and Christopher Trevisan for their insightful comments and meticulous remarks on earlier versions of this work.

\bibliography{Deterministic-Streaming-Independent-Set}

\end{document}